\DeclareMathOperator{\wt}{wt}
\newcommand{\C}{\mathcal{C}}
\newcommand{\F}{\mathbb{F}_2}
\newcommand{\Wtilde}{\widetilde{W}}
\newcommand{\Ytilde}{\widetilde{Y}}
\newcommand{\mc}{\succeq_{\mathrm{mc}}}
\newcommand{\lnoisy}{\succeq_{\mathrm{ln}}}
\newcommand{\rel}{\succeq_{\mathrm{rel}}}
\newcommand{\etamc}{\eta_{\mathrm{mc}}}
\newcommand{\etaln}{\eta_{\mathrm{ln}}}
\DeclareMathOperator{\BEC}{BEC}
\DeclareMathOperator{\BSC}{BSC}
\DeclareMathOperator{\Ber}{Ber}
\DeclareMathOperator{\MAP}{MAP}
\DeclareMathOperator{\Geom}{Geom}
\DeclareMathOperator{\EE}{\mathbb{E}}
\DeclareMathOperator*{\argmax}{arg\,max}
\DeclareMathOperator{\cave}{cave}
\DeclareMathOperator{\con}{con}
\newtheorem{definition}{Definition}[section]
\newtheorem{thm}[definition]{Theorem}
\newtheorem{clm}[definition]{Claim}
\newtheorem{lem}[definition]{Lemma}
\newtheorem{pro}[definition]{Proposition}
\newtheorem{defn}[definition]{Definition}
\date{}
\title{A Quantitative Version of More Capable Channel Comparison}
\author{Donald Kougang-Yombi \\ \texttt{donald.yombi@aims.ac.rw}\\ AIMS Rwanda
\and Jan H{\k{a}}z{\l}a\\ \texttt{jan.hazla@aims.ac.rw}\\ AIMS Rwanda}
\begin{document}
\pagenumbering{arabic}
\maketitle

\begin{abstract}
This paper introduces a quantitative generalization of 
the ``more capable'' comparison of broadcast channels, 
which is termed ``more capable with advantage''. 
Some basic properties are demonstrated (including tensorization
on product channels), and a
characterisation is given for the cases of Binary Symmetric Channel (BSC)
and Binary Erasure Channel (BEC).

It is then applied to two problems. First, a list decoding bound
on the BSC is given that applies to transitive codes that achieve capacity
on the BEC. Second, new lower bounds on entropy rates
of binary hidden Markov processes are derived.
\end{abstract}
\section{Introduction}

In this paper we build on a well-known notion of ``more capable'' ordering
of broadcast channels. This relation was first introduced by K{\"o}rner and
Marton~\cite{KM77} and
has been useful in aspects of channel coding,
see, e.g., Chapter~6 in~\cite{csiszar2011information} and
Chapter~5 in~\cite{NetInf}.

A channel $W:\mathcal{X}\to\mathcal{Y}$ 
is said to be more capable than $\Wtilde:\mathcal{X}\to \widetilde{\mathcal{Y}}$ if
mutual information inequality
$I(X;Y)\ge I(X;\Ytilde)$ holds for all random variables 
$X,Y,\Ytilde$ such that $P_{Y|X}=W$ and $P_{\Ytilde|X}=\Wtilde$.
We will show that this definition gracefully generalizes to 
a notion that we term \emph{more capable with advantage}:
We say that $W$ is more capable than $\Wtilde$ with advantage $\eta$
(which we write
$W+\eta\mc\Wtilde$)
if for every $X,Y,\Ytilde$ as before it holds
$I(X;Y)+\eta\ge I(X;\Ytilde)$. In particular, for every pair of channels
there exists the smallest $\etamc=\etamc(W,\Wtilde)\ge 0$ such that 
$W+\etamc\mc\Wtilde$.

In \Cref{sec:definitions} we show that the more capable with advantage property tensorizes, in particular
if $W+\eta\mc\Wtilde$, then for $N$-fold product channels it holds
$W^N+\eta N\mc\Wtilde^N$. This allows us to obtain a couple of
applications that focus on the Binary Symmetric and Binary Erasure Channels
(BSC and BEC, respectively). In particular, it is well known
that $\BSC_p$ is not more capable than $\BEC_q$ for any non-trivial choice
of the crossover probability $p$ and erasure probability $q$. In contrast, our quantitative notion allows
to leverage bounds obtained for the BEC 
(which in general might be easier to work with) into conclusions
about the BSC. 
Accordingly, in \Cref{sec:bec-bsc} we characterize the more capable
with advantage relation between BEC and BSC. 

A related theorem by Samorodnitsky reads
(see~\cite{hązła2022optimal} for a discussion):
\begin{thm}[Corollary 1.9 in \cite{samorodnitsky2016entropy}]
\label{thm:samorodnitsky}
Let $0\le p,q\le 1$ such that $q\le 4p(1-p)$. Then,
$\BSC_p+h(p)-q\mc\BEC_q$.
\end{thm}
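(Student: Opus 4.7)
The plan is to reduce the target inequality to a direct application of Mrs.~Gerber's Lemma (MGL). Writing $Y = X \oplus Z$ with $Z\sim\Ber(p)$ independent of $X$ gives $I(X;Y)=H(Y)-h(p)$, and a direct computation for the erasure channel gives $I(X;\tilde Y)=(1-q)H(X)$. Substituting, the desired inequality $I(X;Y)+h(p)-q\ge I(X;\tilde Y)$ collapses to
\[
H(Y)\ \ge\ (1-q)\,H(X)+q
\]
for every binary random variable $X$.

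I would then invoke MGL: the map $f:[0,1]\to[h(p),1]$ defined by $f(H):=h\bigl(h^{-1}(H)\ast p\bigr)$, where $h^{-1}:[0,1]\to[0,1/2]$ and $a\ast b:=a(1-b)+(1-a)b$, is convex on $[0,1]$ and satisfies $H(Y)\ge f(H(X))$. Both $f$ and the affine function $L(H):=(1-q)H+q$ equal $1$ at $H=1$. By convexity of $f$, the tangent line to $f$ at $H=1$ is a global lower bound for $f$, so it suffices to show this tangent line lies above $L$ on $[0,1]$, i.e.\ that $f'(1^-)\le 1-q$.

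The key computation is therefore $f'(1^-)$. Using the second-order expansion $h(1/2-\epsilon)=1-\tfrac{2}{\ln 2}\epsilon^2+O(\epsilon^4)$ together with $(1/2-\epsilon)\ast p=1/2-\epsilon(1-2p)$ yields
\[
f'(1^-)\ =\ \lim_{\epsilon\to 0^+}\frac{1-h\bigl((1/2-\epsilon)\ast p\bigr)}{1-h(1/2-\epsilon)}\ =\ (1-2p)^2\ =\ 1-4p(1-p).
\]
The hypothesis $q\le 4p(1-p)$ is exactly the condition $f'(1^-)\le 1-q$, and convexity of $f$ then gives $f(H)\ge L(H)$ throughout $[0,1]$, completing the argument.

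The main obstacle is the slope computation at $H=1$: since $h'(1/2)=0$, both $dH/dx$ and $df/dx$ vanish at $x=1/2$, so $f'(1^-)$ must be extracted via a second-order Taylor expansion (or a careful L'H\^opital argument) rather than a direct quotient. The special threshold $q=4p(1-p)$ in the hypothesis is precisely what makes $L$ coincide with the tangent line to $f$ at $H=1$, which is why the bound is tight for input distributions close to uniform and why this value appears in the theorem.
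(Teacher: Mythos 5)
Your proof is correct, and it takes a genuinely different route from what appears in the paper. The paper itself just cites Samorodnitsky for this theorem, but its own Proposition in Section~3.1 (item~2) re-derives the same fact using the function $f(r)=h(r*p)-h(p)-(1-q)h(r)$: via the discriminant of the quadratic numerator of $f''(r)$, Lemma~\ref{pro1appx} shows that $f$ is convex in~$r$ and decreasing on $[0,1/2]$ precisely when $q\le 4p(1-p)$, so $\inf_r f(r)=f(1/2)=q-h(p)$ and the result follows from Claim~\ref{cl:mc-f-char}. You instead change variables to $H=H(X)$: after the (correct) algebraic reduction $I(X;Y)=H(Y)-h(p)$ and $I(X;\tilde Y)=(1-q)H(X)$, the claim becomes $H(Y)\ge(1-q)H(X)+q$, you invoke the classical Mrs.~Gerber convexity of $H\mapsto h\bigl(h^{-1}(H)\ast p\bigr)$, note both sides equal~$1$ at $H=1$, compute the one-sided slope $f'(1^-)=(1-2p)^2$ by the second-order expansion, and close with the supporting-tangent property of convex functions. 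Both arguments ultimately rest on a one-variable convexity claim, but they parametrize differently. Your version is more modular (it packages the hard analytic step into a named, standard lemma) and it makes transparent why the threshold $q\le 4p(1-p)$ appears: it is exactly $1-f'(1^-)$, the condition that the affine bound agrees with the tangent to the MGL curve at the uniform point. The paper's version is more elementary and self-contained, and its explicit $f''$ analysis additionally delivers the finer structure of $f$ (location of the inflection point, monotonicity intervals) that is reused in the less-noisy characterization and in the proof of Claim~\ref{clm4.3}, which your approach would not give directly.
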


Here and elsewhere, $h(p)=-p\log_2(p)-(1-p)\log_2(1-p)$ denotes the binary entropy function.
Our results are motivated by this theorem. While \Cref{thm:samorodnitsky}
concerns a natural case where the advantage $\etamc=h(p)-q$ is equal to the difference
of the channel capacities, we look for further applications in a more general
situation.

\subsection{List decoding}
We are motivated by the following basic question. Let $\mathcal{C}=\{\mathcal{C}_n\}$ be a
family of binary codes with good performance 
(e.g., vanishing block-MAP error probability)
on the erasure channel $\BEC_q$ for some $0<q<1$. Let $p$ satisfy
$h(p)=q$, so that the channel capacities satisfy
$C(\BSC_p)=C(\BEC_q)$. What can be said about the performance
of $\mathcal{C}$ on the symmetric channel $\BSC_p$?
Our inequalities give a direct bound on the conditional entropy
$H(X|Y_{\BSC})$
of a random codeword $X$ given its noisy version $Y_{\BSC}$ after
transmitting through $\BSC_p$. We then elevate this bound into
an exponential size list-decoding bound under additional assumption that
$\mathcal{C}_n$ is linear and transitive.
We believe that our bounds are the first of this kind and
significantly better than what is implied by existing literature.

\begin{thm}
\label{thm4.1}
    Let $0<p<\frac{1}{2} , 0<q<1$ and $\eta\ge 0$ such that
    $\BSC_p+\eta\mc\BEC_q$.
    
    Let
    $\mathcal{C}=\{\mathcal{C}_n\}$     
    be a family of linear binary codes with increasing block lengths
    and transitive permutation groups.
    Let $X\in\mathbb{F}_2^n$ be
    a random variable uniform over $\mathcal{C}_n$ and $Y_{\BEC}$ and
    $Y_{\BSC}$ the output of transmitting $X$ over $\BEC_q$
    and $\BSC_{p}$, respectively. 
    
    If $H(X|Y_{\BEC})=o(n)$,
    then, for every $\varepsilon>0$, there exists a list decoder
    $D:\mathbb{F}_2^n\to \mathcal{C}_n^L$ with $L\le 2^{(\eta+\varepsilon) n}$  such that:
    \begin{align*}
    \lim_{n\to\infty} \sup_{x\in \mathcal{C}_n} \Pr[X\notin D(Y_{\BSC})\mid X=x]=0\;.
    \end{align*}
\end{thm}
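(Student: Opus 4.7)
The plan is to combine the tensorization of \emph{more capable with advantage} with a threshold MAP list decoder whose failure probability is controlled by bounded-differences concentration of the conditional log-likelihood.

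\textbf{Step 1 (Entropy transfer BEC $\to$ BSC).} Applying the tensorization property of $\mc$ from \Cref{sec:definitions} to $\BSC_p+\eta\mc\BEC_q$ yields $\BSC_p^n+\eta n\mc\BEC_q^n$. Since the marginal $H(X)=\log_2|\mathcal{C}_n|$ is channel-independent, the resulting mutual-information inequality rearranges to
\begin{equation*}
H(X|Y_{\BSC})\le H(X|Y_{\BEC})+\eta n=(\eta+o(1))n,
\end{equation*}
using the hypothesis $H(X|Y_{\BEC})=o(n)$.

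\textbf{Step 2 (Threshold MAP list decoder and reduction to the average).} I would then define
\begin{equation*}
D(y):=\{c\in\mathcal{C}_n:P(X=c\mid Y_{\BSC}=y)\ge 2^{-(\eta+\varepsilon)n}\},
\end{equation*}
so that $|D(y)|\le 2^{(\eta+\varepsilon)n}$ automatically (posteriors sum to $1$), and the failure event is exactly $\{-\log_2 P(X|Y_{\BSC})>(\eta+\varepsilon)n\}$. Writing $Y_{\BSC}=X\oplus Z$ with $Z\sim\Ber(p)^n$ independent of $X$ and using that $\mathcal{C}_n$ is a linear subspace, the substitution $c'\mapsto c'-x$ in the normalising sum $\sum_{c'}P(Y|X=c')$ shows that conditional on $X=x$, the random variable $-\log_2 P(X|Y_{\BSC})$ depends on $Z$ alone in a way not involving $x$. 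Hence $\Pr[X\notin D(Y_{\BSC})\mid X=x]$ is independent of $x$, and the $\sup_x$ in the conclusion collapses to the unconditional probability.

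\textbf{Step 3 (Concentration of the log-likelihood).} Viewing $f(Z):=-\log_2 P(X|Y_{\BSC})$ as a function of the $n$ independent noise bits, I would check the bounded-differences property: a single-bit flip multiplies both $P(Y|X)$ and $\sum_{c'}P(Y|X=c')$ by factors lying in $[p/(1-p),(1-p)/p]$, giving coordinate Lipschitz constant at most $c:=2\log_2\frac{1-p}{p}$. By Step~2 the distribution of $f(Z)$ does not depend on which $x$ is sent, so $\mathbb{E}f(Z)=H(X|Y_{\BSC})\le(\eta+o(1))n$. McDiarmid's inequality then yields
\begin{equation*}
\Pr\!\bigl[f(Z)>(\eta+\varepsilon)n\bigr]\le\Pr\!\bigl[f(Z)-\mathbb{E}f(Z)>\tfrac{\varepsilon}{2}n\bigr]\le\exp\!\left(-\tfrac{\varepsilon^2 n}{2c^2}\right)\longrightarrow 0,
\end{equation*}
for $n$ large enough that the slack $\tfrac{\varepsilon}{2}n$ absorbs the $o(n)$ gap from Step~1.

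The main obstacle is the Lipschitz verification in Step 3: once one confirms that the conditional log-likelihood of a linear code on the BSC has bounded differences with a constant depending only on $p$, standard McDiarmid concentration around $H(X|Y_{\BSC})$ closes the argument. The other ingredients---tensorization, the trivial counting bound on $|D(y)|$, and the well-known BSC symmetry for linear MAP decoders---are essentially mechanical, and the concentration slack $\tfrac{\varepsilon}{2}n$ comfortably dominates the $o(n)$ deficit obtained from the entropy transfer.
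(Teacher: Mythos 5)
Your proposal is correct as far as I can tell, but it takes a genuinely different route from the paper's proof. The paper proceeds in two stages: first \Cref{thm4.2} produces, via an entropy-counting argument involving a fixed-radius list ($\mathcal{L}=\{(x,z):\wt(z)<pn+n^{3/4}\}$), a list decoder whose \emph{average} failure probability is only bounded away from $1$ by a constant $P_e=\frac{1-h(p)+\eta}{1-h(p)+\eta+\delta}$; then \Cref{Thm1} upgrades this to a vanishing, codeword-uniform failure probability by invoking the KKL-based sharp-threshold theorem of \cite{kudekar2016reedmuller}, which is precisely where the transitivity hypothesis enters. Your argument bypasses both stages: you pick the Bayes-posterior threshold decoder $D(y)=\{c:P(X=c\mid Y=y)\ge 2^{-(\eta+\varepsilon)n}\}$, reduce the $\sup_x$ to an unconditional probability by linearity, identify the failure event with $\{-\log_2 P(X\mid Y)>(\eta+\varepsilon)n\}$, note that $\EE[-\log_2 P(X\mid Y)]=H(X\mid Y_{\BSC})\le(\eta+o(1))n$ by tensorization, and close with McDiarmid concentration using the per-coordinate Lipschitz constant $2\log_2\frac{1-p}{p}$. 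I verified the bounded-differences claim: flipping one bit of $z$ multiplies $P_Z(z)$ and each summand $P_Z(z\oplus c)$ by a factor in $[p/(1-p),(1-p)/p]$, so both $\log_2 P_Z(z)$ and $\log_2\sum_c P_Z(z\oplus c)$ shift by at most $\log_2\frac{1-p}{p}$, and the identity $\EE_Z f(Z)=H(X\mid Y_{\BSC})$ indeed follows from the substitution $c'\mapsto c'\oplus x$ enabled by linearity. What your approach buys is substantial: a one-step argument, no appeal to KKL, and, most strikingly, no use of transitivity at all, so if this is airtight it actually \emph{strengthens} \Cref{thm4.1} to arbitrary linear codes. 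That last point is exactly why you should scrutinize your step 3 with extra care before relying on it; the paper's authors went to considerable lengths (new list decoder $D_\alpha$, monotonicity \Cref{pro3}, transitive-symmetry \Cref{pro21}, sharp threshold) to get the vanishing error probability, which is surprising if a direct bounded-differences argument sufficed. Still, I could not find a gap in your reasoning.
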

The assumption $H(X|Y_{\BEC})=o(n)$ follows from any list-decoding scheme on the BEC
with vanishing error probability and list size $2^{o(n)}$. In particular,
it is implied by vanishing error probability for bit- or block-MAP
unique decoding on the BEC.

In~\cite{hązła2022optimal} a similar theorem was proved in the cases
of $p$ and $q$ covered by \Cref{thm:samorodnitsky}. However, that does
not include the equal-capacity case $q=h(p)$ that we emphasize here.
Technically, our new result
requires significantly more than substituting the new inequality.
The initial bound on error probability is weaker than in~\cite{hązła2022optimal}
and we need to employ a careful definition of an appropriate list decoder
and sharp threshold properties of Boolean functions.

A graphical illustration of some of the bounds is provided in \Cref{fig:list-decoding}. It can be checked that in the case $q=h(p)$ the coefficient
$\etamc$ does not exceed 0.04.
\begin{figure}[!ht]
    \centering
    \begin{subfigure}[t]{0.45\textwidth}
        \centering
        \includegraphics[width=\linewidth]{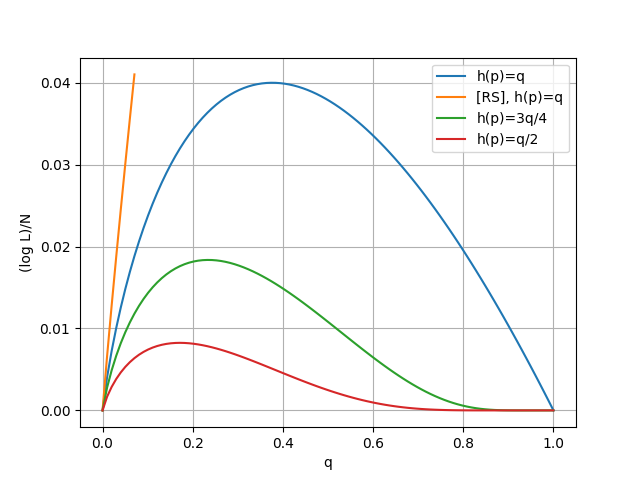}
        \caption{The list decoding exponent for transitive
        codes achieving capacity on the BEC as a function of the
        erasure probability $q$. In the case $q=h(p)$ also
        the bound from~\cite{RS22} is included for illustration.}
        \label{fig:equal-capacity}
    \end{subfigure}
    \begin{subfigure}[t]{0.45\textwidth}
        \centering
        \includegraphics[width=\linewidth]{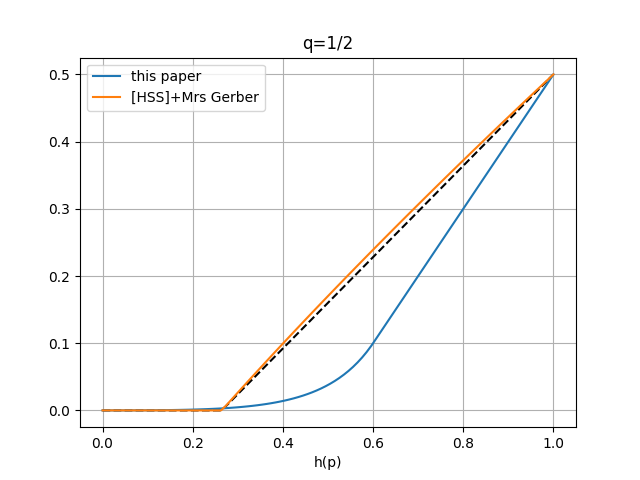}
        \caption{The list decoding exponent for a capacity-achieving
transitive code on the BEC$_{1/2}$ as a function of the BSC
entropy $h(p)$. This (blue) graph becomes a straight
line around $h(p)\approx 0.6$.
The bound implied by~\cite{hkazla2021codes}
is included for illustration. The non-constant part
of the \cite{hkazla2021codes} bound might look like a straight line, but
in fact it is slightly concave.}
    \label{fig:q-half}
    \end{subfigure}
\caption{Illustration of our list decoding bounds.}
\label{fig:list-decoding}
\end{figure}

Let us compare briefly to existing results.  For that purpose
let us assume that the code achieves capacity on the BEC,
that is $R\approx 1-q$, where $R$ is the rate of the code.
Let $\gamma:=\limsup_{n\to\infty}\frac{\log_2 L}{n}$ be the
constant in the list size exponent under MAP decoding. 
\Cref{thm4.1} essentially
says that $\gamma\le\etamc(\BSC_p,\BEC_q)$. It is readily seen
that it is much better than the trivial bound
$\gamma\le \min(R, h(p))$. 

Rao and Sprumont \cite{RS22}
showed that 
$\gamma\le\max\left(p\log_2\frac{2}{R},4p\right)$ for any
transitive linear code. \Cref{fig:equal-capacity} compares this
bound to ours. (Of course the comparison is not fair since their bound applies
to \emph{all} transitive codes, regardless of their performance on the
BEC.)

Furthermore, \cite{hkazla2021codes} shows that linear codes
with (sufficiently fast) vanishing error probability on $\BEC_q$ also
have vanishing error probability on $\BSC_{p'}$ for
$p'<p_0:=1/2-\sqrt{2^{q-1}(1-2^{q-1})}$. For $p>p_0$, one can use
that fact to obtain a lower bound on $H(Y_{\BSC_{p'}})$,
then apply Mrs Gerber's lemma to lower bound $H(Y_{\BSC_p})$,
and finally follow our method in \Cref{sec:list-decoding} to get a list decoding
bound. However, around $q=h(p)$ this seems to give a much worse bound 
(see \Cref{fig:q-half} for $q=0.5$).

\subsection{Hidden Markov process entropy rate}
\label{sec:intro-markov}

Second, we provide new lower bounds on the entropy rate of binary hidden Markov
processes. Let $0<q,\alpha<1/2$. Let $W_2, W_3, \ldots, Z_1, Z_2, \ldots, X_1$
be independent random variables with distributions
$W_i\sim\Ber(q), Z_i\sim\Ber(\alpha), X_1\sim\Ber(1/2)$. Then,
let $X_{n+1}:=X_{n}\oplus W_{n+1}$ and $Y_n:=X_n\oplus Z_n$,
where $\oplus$ denotes addition in $\mathbb{F}_2$.
The problem is to estimate the \emph{entropy rate}
\begin{align*}
    \overline{H}(Y):=\lim_{n\to\infty} \frac{H(Y_1,\ldots,Y_n)}{n}
\end{align*}
In~\cite{ordentlich2016novel}, Ordentlich used a version of Samorodnitsky's result to 
prove a lower bound:
\begin{thm}[Theorem~2 in \cite{ordentlich2016novel}]
\label{thm:ord}
Let $\gamma=4\alpha(1-\alpha)$, $G\sim\Geom(1-\gamma)$ and
denote by $q^{*k}:=\frac{1-(1-2q)^k}{2}$ the $k$-fold convolution of $q$.
Then, it holds
\begin{align*}
\overline{H}(Y)\ge
h\big(\alpha * h^{-1}(\EE h(q^{*G}))\big)\;,
\end{align*}
where $h(p)$ is the binary entropy function
and $\alpha*\beta=\alpha(1-\beta)+(1-\alpha)\beta$ is the convolution operation.
\end{thm}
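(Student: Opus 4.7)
The plan is to combine Mrs Gerber's lemma (to reduce the problem to a conditional entropy of the hidden Markov state $X_0$) with Samorodnitsky's inequality from \Cref{thm:samorodnitsky} (to compare the BSC-noisy observations with hypothetical BEC observations).

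First, I would extend the process to a two-sided stationary one so that $\overline{H}(Y) = H(Y_0 \mid Y_{-1}, Y_{-2}, \ldots)$. Since $Y_0 = X_0 \oplus Z_0$ with $Z_0\sim\Ber(\alpha)$ independent of $(X_0, Y_{-\infty}^{-1})$, the conditional form of Mrs Gerber's lemma applied to the binary variable $X_0$ gives
\begin{equation*}
\overline{H}(Y) \;\ge\; h\!\Bigl(\alpha * h^{-1}\bigl(H(X_0 \mid Y_{-1}, Y_{-2}, \ldots)\bigr)\Bigr).
\end{equation*}
Because $p\mapsto h(\alpha * p)$ and $h^{-1}$ are both increasing on the relevant ranges, it then suffices to prove
\begin{equation*}
H(X_0 \mid Y_{-1}, Y_{-2}, \ldots) \;\ge\; \EE\, h(q^{*G}). \qquad (\star)
\end{equation*}

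Next, I would introduce the auxiliary process $\widetilde{E}_{-i}$ obtained by independently passing each $X_{-i}$ through $\BEC_\gamma$. If $G$ is the index of the first non-erased $\widetilde{E}_{-i}$, then $G\sim\Geom(1-\gamma)$ by memorylessness, and conditional on $G$ and $\widetilde{E}_{-G}=X_{-G}$ the variable $X_0$ differs from $X_{-G}$ by the sum of $G$ independent $\Ber(q)$ noises; a short computation then yields $H(X_0 \mid \widetilde{E}_{-1}, \widetilde{E}_{-2}, \ldots) = \EE h(q^{*G})$. Therefore $(\star)$ reduces to the channel-comparison claim
\begin{equation*}
H(X_0 \mid Y_{-1}, Y_{-2}, \ldots) \;\ge\; H(X_0 \mid \widetilde{E}_{-1}, \widetilde{E}_{-2}, \ldots). \qquad (\dagger)
\end{equation*}
The base case $H(X_0\mid Y_{-1})\ge H(X_0\mid \widetilde{E}_{-1})$ is exactly \Cref{thm:samorodnitsky} applied with binary input $X_{-1}\sim\Ber(1/2)$ and propagated through the one-step Markov transition: it reads $h(\alpha * q) \ge (1-\gamma)h(q) + \gamma$ with $\gamma=4\alpha(1-\alpha)$. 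For the full $(\dagger)$ I would induct on the length of the observation window, alternating a one-step Mrs Gerber move along the Markov chain ($X_{-j}=X_{-j-1}\oplus W_{-j}$) with the Samorodnitsky bound on the added BSC observation, so that the renewal structure of the geometric distribution telescopes the local bound into $\EE h(q^{*G})$ in the limit.

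The hard part will be closing this induction. Samorodnitsky's inequality compares BSC and BEC observations of \emph{their own channel input}, while $(\dagger)$ tracks information about $X_0$, which lies at the far end of a Markov chain from the channel inputs $X_{-i}$; a naive conditional application of Samorodnitsky lets the slack $h(\alpha)-\gamma$ accumulate across observations. The precise choice $\gamma=4\alpha(1-\alpha)$, which is the erasure probability at which Samorodnitsky's bound is tight at $p=1/2$, is exactly what is needed for the slack to be absorbed at each step and for the correct $\Geom(1-\gamma)$ distribution to emerge in the final bound.
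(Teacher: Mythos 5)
The paper does not give a proof of this theorem: it is imported verbatim from \cite{ordentlich2016novel}, and the authors note after \Cref{pro5.2} that its proof rests on Samorodnitsky's Mrs~Gerber-style Theorem~1.6, which is strictly stronger than the ``more capable'' Corollary~1.9 reproduced here as \Cref{thm:samorodnitsky}. Your reduction---Mrs~Gerber on the single bit $X_0$, the auxiliary $\BEC_\gamma$ process $\widetilde E$, the identity $H(X_0\mid\widetilde E_{-1},\widetilde E_{-2},\dots)=\EE h(q^{*G})$, and the target comparison $(\dagger)$---is exactly the right skeleton, and mirrors how the bound in \Cref{thm:ord} is actually obtained. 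But the gap you flag at the end is a genuine obstruction, and it reflects that you are reaching for the wrong member of the channel-ordering family.

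The comparison $(\dagger)$ asks to control $I(X_0;\cdot)$ where $X_0$ sits one or many Markov steps \emph{upstream} of the channel inputs $X_{-i}$; by Definition~\ref{def3} this is a \emph{less noisy} comparison, which the more capable relation of \Cref{thm:samorodnitsky} does not supply. Indeed your ``base case'' $h(\alpha*q)\ge\gamma+(1-\gamma)h(q)$ is not \Cref{thm:samorodnitsky} applied to $X_{-1}\sim\Ber(1/2)$---that application reduces to the tautology $(1-h(\alpha))+(h(\alpha)-\gamma)\ge 1-\gamma$. Rather it is the less noisy inequality $I(X_0;\widetilde E_{-1})\ge I(X_0;Y_{-1})$ for the Markov triple $X_0\to X_{-1}\to(Y_{-1},\widetilde E_{-1})$. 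The fix is to discard \Cref{thm:samorodnitsky} and the induction with its accumulating slack, and instead use the classical fact that $\BEC_\gamma\lnoisy\BSC_\alpha$ (with zero advantage) precisely when $\gamma\le 4\alpha(1-\alpha)$---part~1 of the paper's less-noisy proposition, following from the convexity of $f$ in \Cref{pro1appx}. Since less noisy tensorizes (\Cref{mc-tensorize}), one gets $\BEC_\gamma^{\,n}\lnoisy\BSC_\alpha^{\,n}$, and applying this to the Markov chain $X_0\to X_{-n}^{-1}\to(Y_{-n}^{-1},\widetilde E_{-n}^{-1})$ gives $H(X_0\mid Y_{-n}^{-1})\ge H(X_0\mid \widetilde E_{-n}^{-1})$ in one stroke; letting $n\to\infty$ yields $(\dagger)$. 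No slack of $h(\alpha)-\gamma$ accumulates because that advantage lives in the opposite direction $\BSC_\alpha+\eta\mc\BEC_\gamma$, which is not what $(\dagger)$ requires.
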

As with list decoding, we can replace the inequality by Samorodnitsky
with ``more capable with advantage''
inequalities for the whole range
of erasure channels. This results in
strictly better lower bounds on $\overline{H}(Y)$,
which, however, still do not reach the true rate.
An illustration is provided in \Cref{fig1}.

\begin{thm}
\label{pro5.3}
    In the setting above, it holds:

\begin{align}
\label{eq1.4}
    \overline{H}(Y)\ge \underset{0\le \gamma\le 1}{\sup}\big((1-\gamma)\mathbb{E} h(q^{*G})+h(\alpha)-\etamc(\BSC_\alpha,\BEC_\gamma)\big),
\end{align}
where 
$G\sim \mathrm{Geom}(1-\gamma)$.
\end{thm}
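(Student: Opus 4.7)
The plan is to follow the blueprint of Ordentlich's proof of \Cref{thm:ord}, substituting Samorodnitsky's bound by the more capable with advantage comparison $\BSC_\alpha^n + n\etamc(\BSC_\alpha,\BEC_\gamma) \mc \BEC_\gamma^n$ supplied by the tensorization result of \Cref{sec:definitions}. Since $X$ is a stationary binary Markov chain (started from $\Ber(1/2)$) and the BSC is symmetric, the process $Y$ is stationary, the entropy rate $\overline{H}(Y)$ exists, and $H(Y^n\mid X^n)=n h(\alpha)$. Fix $\gamma\in[0,1]$ and let $U^n$ denote the output of transmitting $X^n$ through $\BEC_\gamma^n$; the tensorized inequality then yields
\begin{equation*}
H(Y^n)=I(X^n;Y^n)+nh(\alpha)\ \ge\ I(X^n;U^n)+nh(\alpha)-n\etamc(\BSC_\alpha,\BEC_\gamma).
\end{equation*}

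What remains is to analyze $I(X^n;U^n)$ asymptotically. Let $S\subseteq[n]$ be the random set of erased coordinates (each included independently with probability $\gamma$) and let $\bar S$ be its complement. Since $S$ is independent of $X^n$ and is determined by $U^n$, standard manipulations give $H(U^n\mid X^n)=nh(\gamma)$ and $H(U^n)=nh(\gamma)+\EE_S H(X_{\bar S})$, whence $I(X^n;U^n)=\EE_S H(X_{\bar S})$. Writing $\bar S=\{i_1<\cdots<i_k\}$, the Markov property together with $X_{i_1}\sim\Ber(1/2)$ gives
\begin{equation*}
H(X_{\bar S})\ =\ 1+\sum_{j=2}^{k} h\bigl(q^{*(i_j-i_{j-1})}\bigr).
\end{equation*}
For any fixed coordinate $i$ away from the endpoints, the probability that $i$ is unerased and the next unerased coordinate is exactly $i+k$ equals $(1-\gamma)^2\gamma^{k-1}$, so summing over $i$ and dividing by $n$ gives
\begin{equation*}
\lim_{n\to\infty}\frac{1}{n}\,\EE_S\,H(X_{\bar S})\ =\ (1-\gamma)\sum_{k\ge 1}(1-\gamma)\gamma^{k-1}\,h(q^{*k})\ =\ (1-\gamma)\,\EE h(q^{*G}),
\end{equation*}
with $G\sim\Geom(1-\gamma)$. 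Combining the two displays, dividing by $n$, letting $n\to\infty$, and taking the supremum over $\gamma$ completes the proof.

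The main difficulty is clerical rather than conceptual: one needs to check that the $O(1)$ boundary contributions (the constant $1$ from $X_{i_1}$ and the absence of a gap term after the last unerased coordinate) are negligible after division by $n$. This is routine; alternatively, one can pass to the two-sided stationary extension of the Markov chain and invoke directly the existence of the entropy rate of the stationary $\BEC_\gamma$-output process, which gives the same limit $(1-\gamma)\,\EE h(q^{*G})$ without any boundary bookkeeping.
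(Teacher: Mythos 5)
Your proof is correct and follows essentially the same route as the paper: apply the tensorized more-capable-with-advantage inequality to lower bound $H(Y^n)/n$ by $I(X^n;U^n)/n + h(\alpha) - \etamc(\BSC_\alpha,\BEC_\gamma)$ (the paper's \Cref{pro5.1}), then use the limit $\lim_n I(X^n;U^n)/n = (1-\gamma)\EE h(q^{*G})$ (the paper's \Cref{pro5.2}, cited from Ordentlich), and take a supremum over $\gamma$. The only difference is that you re-derive Ordentlich's erasure-channel computation from scratch, whereas the paper simply cites it; your derivation is sound.
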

\begin{figure}[H]
\centering
\begin{subfigure}[t]{.45\textwidth}
  \centering
  \includegraphics[width=\linewidth]{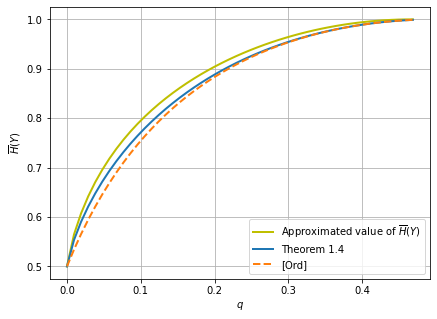}
  \caption{$\alpha=0.11$}
  \label{fig:sub1}
\end{subfigure}
\begin{subfigure}[t]{0.45\textwidth}
  \centering
  \includegraphics[width=\linewidth]{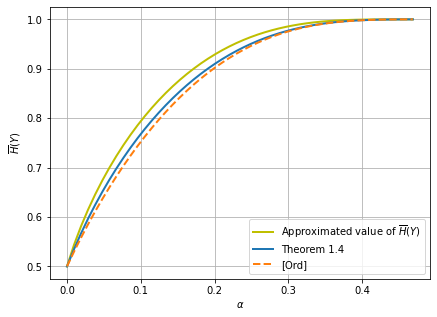}
  \caption{$q=0.11$}
  \label{fig:sub2}
\end{subfigure}
\caption{Comparison of our lower bounds for $\overline{H}(Y)$, that given in \cite{ordentlich2016novel}, and
the true rate $\overline{H}(Y)$ as computed
by an approximation algorithm.}
\label{fig1}
\end{figure}
\paragraph{Organization}
In the rest of the paper we proceed as follows.
In \Cref{sec:definitions,sec:bec-bsc} we give definitions
and basic properties,
and characterize the ``more capable with advantage'' relation between the BSC and BEC.
We also discuss an analogous ``less noisy with advantage'' relation
and show that it also tensorizes and has a clean characterization in terms
of the mutual information difference function.
Then we proceed to applications: list decoding in \Cref{sec:list-decoding}
and hidden Markov processes in \Cref{sec:hidden-markov}.

\section{Definitions and basic properties}
\label{sec:definitions}
Here and below,
assume that we are given
two discrete memoryless channels (DMC) $W$ and $\Wtilde$ over input alphabet $\mathcal{X}$
and random variables
$X,Y,Z$ such that $P_{Y|X}=W$ and $P_{Z|X}=\Wtilde$.

\begin{defn}
\label{def2.1}
    Channel $W$ is said to be more capable than $\Wtilde$ 
    (written as $W\mc\Wtilde$) if for every $X,Y,Z$ as above it holds
    $I(X;Y)\ge I(X;Z)$.

    Let $\eta\ge 0$. We will say that $W$ with advantage $\eta$
    is more capable than $\Wtilde$ and write
    $W+\eta\mc\Wtilde$ if for every $X,Y,Z$ it holds
    $I(X;Y)+\eta\ge I(X;Z)$.
\end{defn}

The original paper~\cite{KM77} also introduced the ``less noisy'' ordering
of channels, which also found many uses, see, e.g., discussion
in~\cite{makur2018comparison, polyanskiy2016strong}. We also make
an analogous definition in the less noisy case:

\begin{defn}
    We say that $W$ is less noisy than $\Wtilde$ and
    write $W\lnoisy\Wtilde$ if for every Markov chain
    $U\to X\to(Y,Z)$ it holds
    $I(U;Y)\ge I(U;Z)$.
    For $\eta\ge 0$, we say that $W$ with advantage $\eta$
    is less noisy than $\Wtilde$ and write
    $W+\eta\lnoisy\Wtilde$ if for every Markov chain $U\to X\to(Y,Z)$ it holds
    $I(U;Y)+\eta\ge I(U;Z)$.
    \label{def3}
\end{defn}

From the definition, being more capable (less noisy) with advantage $\eta$ implies being more capable (less noisy) with any larger advantage $\eta'\ge\eta$. Therefore, it makes sense to introduce:
\begin{definition}
Let $\mathrm{rel}\in\{\mathrm{mc},\mathrm{ln}\}$.
Let $\eta_{\mathrm{rel}}(W,\Wtilde)$ be the infimum of all $\eta$ such
that $W+\eta\rel \Wtilde$. 
\end{definition}
By continuity considerations, it holds that
$W+\etamc(W,\Wtilde)\mc\Wtilde$. Also note that
$0\le\etamc(W,\Wtilde)\le C(\Wtilde)$, where $C(\Wtilde)$
denotes the Shannon capacity of $\Wtilde$. Furthermore, for discrete
memoryless symmetric channels (which maximize mutual input/output information
for uniform distribution), it holds 
$\etamc\ge C(\Wtilde)-C(W)$.

It turns out that both properties generalize
in a natural way to the case of product channels.
The proofs are straightforward generalizations of the case 
without advantage. We state the result here and provide
proofs in the appendix.

\begin{pro}
\label{mc-tensorize}
Let $\mathrm{rel}\in\{\mathrm{mc},\mathrm{ln}\}$.
Let $W_1+\eta_1\rel\Wtilde_1$ and $W_2+\eta_2\rel\Wtilde_2$. Then,
$W_1\times W_2+(\eta_1+\eta_2)\rel\Wtilde_1\times\Wtilde_2$.
In particular,
for every $n\ge 1$,
if $W+\eta\rel\Wtilde$, then $W^n+\eta n\rel\Wtilde^{n}$.
\end{pro}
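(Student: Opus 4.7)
The plan is to handle both $\mathrm{rel}\in\{\mathrm{mc},\mathrm{ln}\}$ via the same telescoping argument that swaps one output at a time. Let $(X_1,X_2)$ be an arbitrary input pair, and let $Y_i,Z_i$ denote the outputs of $W_i,\Wtilde_i$, respectively, under the product-channel convention (in particular, channel-$i$ noise is independent of everything else in the model). Set $A=(X_1,X_2)$ in the more capable case and $A=U$ in the less noisy case, where $U\to(X_1,X_2)\to(Y_1,Y_2,Z_1,Z_2)$ is the auxiliary Markov chain of \Cref{def3}. It suffices to establish the two ``one-output'' inequalities
\[
I(A;Y_1 Y_2)+\eta_2 \;\ge\; I(A;Y_1 Z_2)
\quad\text{and}\quad
I(A;Y_1 Z_2)+\eta_1 \;\ge\; I(A;Z_1 Z_2),
\]
since adding them gives the desired bound. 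The $n$-fold statement then follows by an immediate induction on $n$.

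To prove the first inequality, I would expand by the chain rule
\[
I(A;Y_1 Y_2)-I(A;Y_1 Z_2)=I(A;Y_2\mid Y_1)-I(A;Z_2\mid Y_1),
\]
and then fix $Y_1=y_1$. Under the resulting conditional distribution, $Y_2$ and $Z_2$ are produced from $X_2$ by the channels $W_2$ and $\Wtilde_2$ using noise independent of the rest of the model; consequently $(Y_2,Z_2)$ depends on $A$ only through $X_2$, and in the less noisy case $A\to X_2\to(Y_2,Z_2)$ remains a Markov chain after conditioning on $Y_1=y_1$. In the more capable case this reduces to $I(X_2;Y_2\mid Y_1=y_1)+\eta_2\ge I(X_2;Z_2\mid Y_1=y_1)$ via $W_2+\eta_2\mc\Wtilde_2$ applied to the conditional marginal of $X_2$; in the less noisy case it is the direct application of $W_2+\eta_2\lnoisy\Wtilde_2$ with the conditional version of $U$ as the auxiliary. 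Averaging over $y_1$ yields the first inequality. The second is entirely symmetric: condition on $Z_2=z_2$ and apply the hypothesis for channel~1.

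The only real subtlety, and in my view the main bookkeeping obstacle, is verifying that conditioning on $Y_1$ (or $Z_2$) preserves the Markov/independence structure needed for the single-channel hypothesis to apply. Concretely, one must observe that the channel-2 noise is independent of $(U,X_1,Y_1,Z_1)$ given $X_2$, which is immediate from the product-channel definition but easy to lose track of when one also tracks an auxiliary $U$. Once this is in place, the per-$y_1$ (resp.\ per-$z_2$) applications of the single-channel hypotheses integrate cleanly to the conditional mutual-information inequalities, and the telescoping closes the argument.
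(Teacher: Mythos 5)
Your proof is correct and follows essentially the same route as the paper: the two ``one-output'' chain-rule inequalities you telescope (swapping one coordinate at a time through the mixed intermediate $(Y_1,Z_2)$, resp.\ $(Z_1,Y_2)$, while noting that conditioning on the other coordinate preserves the channel structure) are the content of the paper's Lemma~\ref{lemappx1}, and the less-noisy case is identical to the paper's chain-rule argument. The only difference is presentational: the paper derives the decomposition by a longer direct entropy manipulation and states it as a separate lemma, whereas you obtain it in one line from the chain rule together with the conditional independences of the product channel.
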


In analyzing our relations between two channels we will use the following
natural function.
Let $\Delta(\mathcal{X})$ be the set of probability distributions
on $\mathcal{X}$. As is common, we will think of it as a convex
subset of a linear space:
\begin{definition}
\label{def f}
We let $f=f(W,\Wtilde):\Delta(\mathcal{X})\to\mathbb{R}$ as
\begin{align*}
    f(P) := I(X;Y) - I(X;Z)\;,
\end{align*}
where $X,Y,Z$ are random variables as above such that $P_X=P$.
\end{definition}

The values $\etamc$ and $\etaln$ can be naturally
characterized in terms of function $f$. First, the following claim follows
immediately from the definitions:

\begin{clm}
\label{cl:mc-f-char}
$\eta_{mc}(W,\Wtilde)=-\inf_P f(P)$.
\end{clm}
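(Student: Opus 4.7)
The plan is to unwind the definitions directly; the claim is essentially a tautology once one rewrites the mutual information inequality in terms of $f$. By \Cref{def2.1}, the relation $W+\eta\mc\Wtilde$ holds iff for every input distribution $P\in\Delta(\mathcal{X})$ one has $I(X;Y)+\eta\ge I(X;Z)$, where $X,Y,Z$ are coupled via $P_X=P$, $P_{Y|X}=W$, $P_{Z|X}=\Wtilde$. By \Cref{def f}, this is equivalent to the pointwise inequality $\eta\ge -f(P)$ for all $P$, which in turn is equivalent to the single inequality
\begin{align*}
\eta \;\ge\; \sup_{P\in\Delta(\mathcal{X})} \bigl(-f(P)\bigr) \;=\; -\inf_{P\in\Delta(\mathcal{X})} f(P)\;.
\end{align*}

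First I would record this equivalence, and then take the infimum over all admissible $\eta\ge 0$: it follows that $\etamc(W,\Wtilde)=\max\bigl(0,-\inf_P f(P)\bigr)$. To eliminate the $\max$ with $0$, observe that for any point-mass distribution $P=\delta_x$ the induced variables $X,Y,Z$ are deterministic up to the channel noise with $X$ constant, so $I(X;Y)=I(X;Z)=0$ and thus $f(\delta_x)=0$. Hence $\inf_P f(P)\le 0$, so $-\inf_P f(P)\ge 0$ and the outer $\max$ is redundant, yielding $\etamc(W,\Wtilde)=-\inf_P f(P)$ as claimed.

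There is no real obstacle here: the only point that requires a moment of care is the distinction between the infimum defining $\etamc$ being over $\eta\ge 0$ versus over all real $\eta$, which is resolved by the point-mass observation above. Attainment of the infimum (i.e.\ that $W+\etamc\mc\Wtilde$ actually holds) is not part of the claim, but for completeness it follows from compactness of $\Delta(\mathcal{X})$ together with continuity of $f$ in $P$, both of which are standard for discrete memoryless channels on a finite alphabet.
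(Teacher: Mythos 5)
Your proof is correct and matches the paper's view of this as an immediate consequence of the definitions (the paper states the claim without proof). The point-mass observation neatly justifies dropping the outer $\max$ with $0$, which the paper leaves implicit.
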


In the less noisy case the characterisation is more interesting.
\begin{defn}
    Let $I$ be a convex set and 
    $g: I\to\mathbb{R}$ be a function, 
    then the (upper) concave envelope of $g$ is defined as:
    \begin{align*}
        g^{\cave}(x)&=\inf\{h(x), h\ \text{concave},\ h(y)\ge g(y)\, \forall y\in I\}
    \\&=\sup \big\{\sum_{i=1}^{k}\lambda_i g(x_i); 0\le\lambda_1,\cdots,\lambda_k\le1,  \sum_{i=1}^{k}\lambda_i=1,\sum_{i=1}^{k}\lambda_i x_i=x\big\}.
    \end{align*}
\end{defn}
It is well-known that $g^{\cave}$ is a concave function and for every $r\in I, g^{\cave}(r)\ge g(r)$ \cite{nair2013upper}.
\begin{pro}
\label{lem:ln-f-char}
    $\etaln(W,\Wtilde)=\sup_P f^{\cave}(P)-f(P)$. 
\end{pro}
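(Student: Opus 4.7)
The plan is to relate the less noisy with advantage condition directly to the function $f$ via the chain rule for mutual information, and then recognize the resulting supremum over Markov chains as the concave-envelope gap. Fix an arbitrary Markov chain $U\to X\to(Y,Z)$. Because $U$ is conditionally independent of $(Y,Z)$ given $X$, the chain rule gives $I(U,X;Y)=I(X;Y)$ and also $I(U,X;Y)=I(U;Y)+I(X;Y|U)$, hence $I(U;Y)=I(X;Y)-I(X;Y|U)$, and analogously $I(U;Z)=I(X;Z)-I(X;Z|U)$. Subtracting,
\[
I(U;Z)-I(U;Y) \;=\; \bigl[I(X;Z|U)-I(X;Y|U)\bigr] - \bigl[I(X;Z)-I(X;Y)\bigr].
\]

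Next I would observe that, since conditioning on $U=u$ does not alter the channels $W,\Wtilde$ acting on $X$, one has $I(X;Y|U=u)-I(X;Z|U=u)=f(P_{X|U=u})$ for every $u$; taking expectation over $U$ yields $I(X;Y|U)-I(X;Z|U)=\EE_U\bigl[f(P_{X|U})\bigr]$, while the second bracket above is $-f(P_X)$. Substituting gives the key identity
\[
I(U;Z)-I(U;Y) \;=\; \EE_U\bigl[f(P_{X|U})\bigr]-f(P_X).
\]
Therefore the condition $I(U;Y)+\eta\ge I(U;Z)$ is equivalent to $\eta\ge\EE_U[f(P_{X|U})]-f(P_X)$, and by \Cref{def3} and the definition of $\etaln$,
\[
\etaln(W,\Wtilde)\;=\;\sup_{U\to X\to(Y,Z)}\Bigl(\EE_U[f(P_{X|U})]-f(P_X)\Bigr).
\]

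Finally I would match this supremum with $\sup_P\bigl(f^{\cave}(P)-f(P)\bigr)$ via the convex-combination formula for the concave envelope recalled just before the proposition. For any Markov chain, $P_X=\EE_U[P_{X|U}]$ is a convex combination of the conditionals $P_{X|U=u}$, so $\EE_U[f(P_{X|U})]\le f^{\cave}(P_X)$, giving one inequality after taking the sup over $P_X$. For the other direction, every finite convex combination $P=\sum_{i=1}^k\lambda_i P_i$ is realized by the chain in which $U=i$ with probability $\lambda_i$, $X\mid U{=}i\sim P_i$, and $(Y,Z)$ is sampled from $W\times\Wtilde$ given $X$; this shows that every pair $\bigl(P,\sum_i\lambda_i f(P_i)\bigr)$ appearing in the definition of $f^{\cave}$ arises from an admissible Markov chain, giving the matching inequality. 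The only real care needed is the conditional identity in the first step, which depends on the channels acting pointwise on $X$ independently of $U$; once that is in place the rest is a translation between two equivalent languages (Markov extensions versus convex decompositions of $P_X$).
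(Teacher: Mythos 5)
Your proposal is correct and follows essentially the same route as the paper: both first use the Markov property and the chain rule to establish the key identity $I(U;Z)-I(U;Y)=\EE_U[f(P_{X|U})]-f(P_X)$, then identify $\etaln$ as the supremum of this quantity over Markov extensions, and finally match that supremum with $\sup_P\bigl(f^{\cave}(P)-f(P)\bigr)$ by translating between convex decompositions of $P_X$ and choices of $U$. (The only cosmetic differences are a transient sign slip in your intermediate display that self-corrects in the ``key identity,'' and the fact that you invoke the sup-characterization of $f^{\cave}$ directly where the paper applies $f\le f^{\cave}$ followed by Jensen.)
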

\begin{proof}
    Let $\eta=\sup_P f^{\cave}(P)-f(P)$, we want to show that 
    $\eta=\eta_{\mathrm{ln}}$. We will show it in two steps.
    First, we show $W+\eta\lnoisy\Wtilde$. Then we show that
    $W$ with advantage $\eta-\varepsilon$ is not less noisy
    than $\Wtilde$ for any $\varepsilon>0$.

    For the first step, we want to show
    $I(U;Z)-I(U;Y)\le\eta$ for every Markov chain
    $U\to X\to (Y,Z)$. Indeed,
    \begin{IEEEeqnarray*}{rCll}
    I(U;Z)-I(U;Y)&=&I(X;Z)-I(X;Z|U)-\big(I(X;Y)-I(X;Y|U)\big)
    \qquad&\text{by Markov property,}
    \\
    &=&\EE_U\left[ f\left(P_{X|U}\right)\right] - f(P_X)&\\
    &\le&\EE_U\left[ f^{\cave}\left(P_{X|U}\right)\right] - f(P_X)
    &\text{since $f^{\cave}\ge f$}\\
    &\le& f^{\cave}(P_X)-f(P_X)\le\eta
    & \text{since $f^{\cave}$ is concave.}
    \end{IEEEeqnarray*}

    On the other hand, let $P$ be a distribution on $\mathcal{X}$
    that realizes $f^{\cave}(P)-f(P)=\eta$. By the sup-definition of
    concave envelope there exist some $0\le\lambda_1,\ldots,\lambda_k$,
    $\sum_{i=1}^k\lambda_i=1$ and $P_1,\ldots,P_k$, $\sum_{i=1}^k\lambda_i P_i=P$ such that
    $\sum_{i=1}^k\lambda_if(P_i)>f(P)+\eta-\varepsilon$.
    Now we can specify the joint distribution $(U,X)$ by $\Pr[U=i]=\lambda_i$
    and $P_{X|U=i}=P_i$ and compute as before
    \begin{align*}
        I(U;Z)-I(U;Y)=\sum_{i=1}^k\lambda_i f(P_i)-f(P)>\eta-\varepsilon
    \end{align*}
    Hence, $W$ with advantage $\eta-\varepsilon$ is \emph{not} less noisy than
    $\Wtilde$, as claimed.
\end{proof}

\section{Characterization for the $\BEC_q$ and the $\BSC_p$}
\label{sec:bec-bsc}
In this section, we give a characterization of the notions of more capable (respectively less noisy)
with advantage in the classical case of
Binary Symmetric Channel (BSC) and the Binary Erasure Channel (BEC). 
In the following, $p$ will denote the crossover probability of the BSC and $q$ the erasure probability of the BEC.

We did not find closed-form expressions for
$\eta_{\mathrm{mc}}$ and $\eta_{\mathrm{ln}}$ in all cases,
but we can characterize them using \Cref{cl:mc-f-char} and \Cref{lem:ln-f-char} and
the properties of the mutual information difference function $f$.
Accordingly, for $r\in[0,1]$ and $X_r\sim \Ber(r)$, let $Y_{\BEC}$ and $Y_{\BSC}$ be the outputs of transmitting $X_r$ through the $\BEC_q$ and $\BSC_p$, respectively. 
In this section, given $q$ and $p$, we let
\begin{align*}
f(r):=I(X_r;Y_{\BSC})-I(X_r;Y_{\BEC})=h(r* p)-h(p)-(1-q)h(r)\;.
\end{align*}

In the following we will use a lemma summarizing the properties of
$f$. Most of this analysis can be found e.g., in Claim~3 in~\cite{nair2009capacity},
but we need a slightly stronger statement.
We provide the full proof in the appendix. Various shapes of $f$ and its
concave/convex envelopes for $q=1/2$ can be seen in \Cref{f}. 
\begin{figure}[H]
        \centering
        \includegraphics[width=0.9\linewidth]{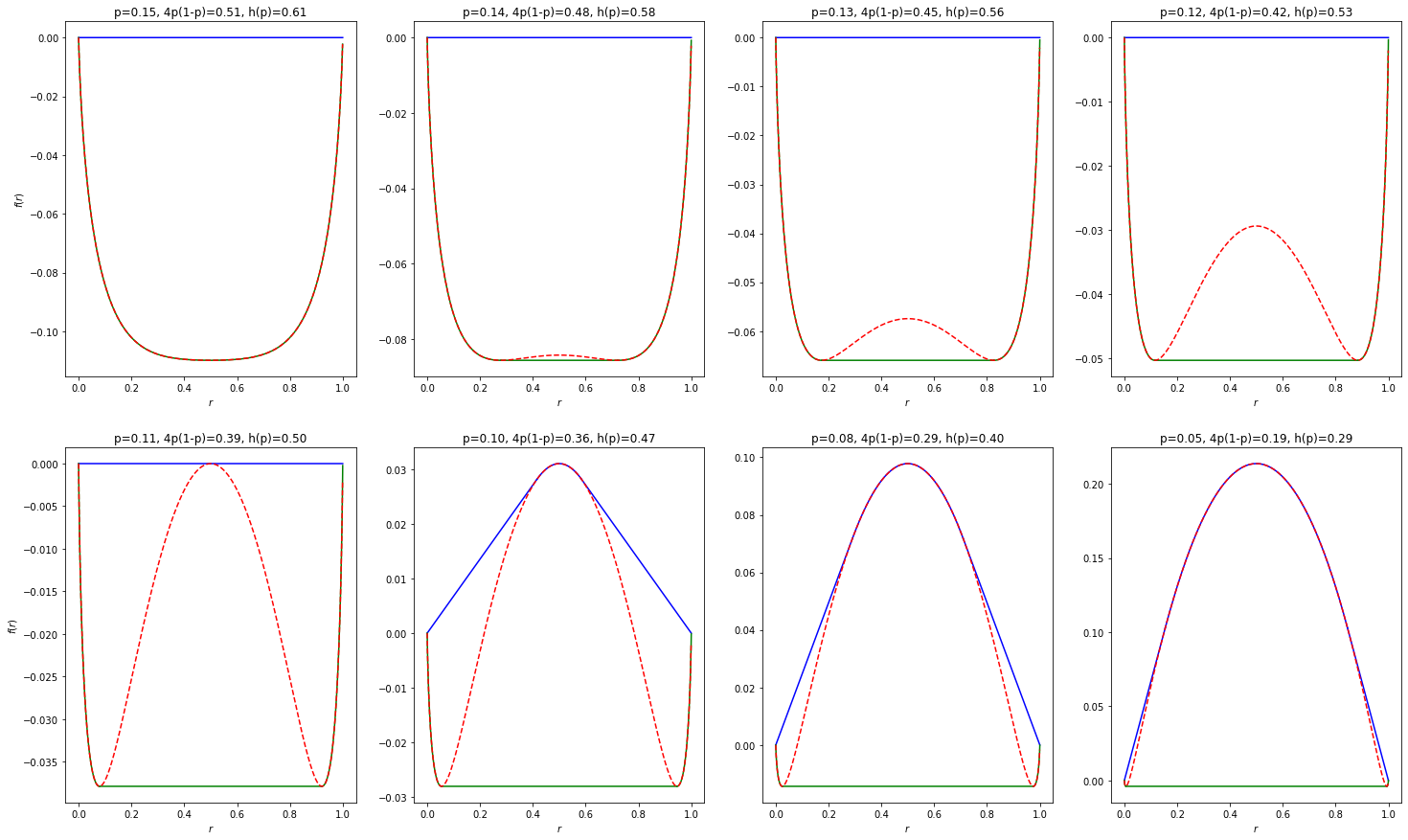}
        \caption{Shapes of $f$ and its concave (in blue) and convex (in green) envelopes for $q=0.5$ and different values of $p$.}
        \label{f}
\end{figure}

\begin{lem}
\label{pro1appx}
   For $p,q\in(0,1)$, on the interval
   $0\le r\le 1/2$, we have:\footnote{
   For $r>1/2$ observe that $f(r)=f(1-r)$.}
    \begin{itemize}
        \item For $ h(p)<q$,
        there exists $r_0$ in $(0,\frac{1}{2})$ such that $f$ decreases from $0$ to $r_0$ and then increases from $r_0$ to $1/2$. Furthermore, there exists
        $r'\in(r_0,1/2)$ such that $f$ is convex on $[0,r']$ and concave on
        $[r',1/2]$. The  maximum of $f$ is at $r=1/2$.        
        \item For $4p(1-p) < q \le h(p)$, $f$ has the same properties as
        in the preceding point,
        but now the maximum is at $r=0$.
        \item For $q\le 4p(1-p)$, $f$ is decreasing from $0$ to $1/2$. 
        Furthermore, $f$ is convex.
    \end{itemize}
\end{lem}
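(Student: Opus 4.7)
The plan is to study $f$ through its first and second derivatives on $(0,1/2)$, combined with the boundary values. Immediately we have $f(0)=0$ (since $0*p=p$) and $f(1/2)=q-h(p)$ (since $\tfrac12*p=\tfrac12$ and $h(\tfrac12)=1$), so these alone decide where the maximum sits in the three cases. Writing $h'(x)=\log_2\frac{1-x}{x}$, the first derivative is
\[
f'(r)=(1-2p)h'(r*p)-(1-q)h'(r),
\]
from which $f'(0^+)=-\infty$ (the $(1-q)h'(r)$ term blows up) and $f'(1/2)=0$ (both $h'$ values vanish).

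For the curvature, using $h''(x)=-1/\bigl(x(1-x)\ln 2\bigr)$, the inequality $f''(r)\ge 0$ is equivalent to $(1-q)\phi(r)\ge(1-2p)^2$, where
\[
\phi(r):=\frac{(r*p)(1-r*p)}{r(1-r)}.
\]
The key algebraic step I would do is to write $r*p=p+(1-2p)r$ and expand the numerator, obtaining the clean identity
\[
\phi(r)=(1-2p)^2+\frac{p(1-p)}{r(1-r)}.
\]
From this it is immediate that $\phi$ is strictly decreasing on $(0,1/2]$, with $\phi(0^+)=\infty$ and $\phi(1/2)=1$. Thus $f''(r)\ge 0$ iff $\phi(r)\ge(1-2p)^2/(1-q)$.

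The case split follows by comparing the threshold $(1-2p)^2/(1-q)$ with the range $[1,\infty)$ of $\phi$. If $q\le 4p(1-p)$, then $(1-2p)^2/(1-q)\le 1\le\phi(r)$, so $f$ is convex on $[0,1/2]$; combined with $f'(0^+)=-\infty$ and $f'(1/2)=0$, convexity forces $f'\le 0$, hence $f$ is decreasing, yielding the third bullet. Otherwise $(1-2p)^2/(1-q)>1$, and strict monotonicity of $\phi$ produces a unique $r'\in(0,1/2)$ with $\phi(r')=(1-2p)^2/(1-q)$, so $f$ is convex on $[0,r']$ and concave on $[r',1/2]$.

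For the first two bullets I would extract the monotonicity as follows. Concavity on $[r',1/2]$ and $f'(1/2)=0$ imply $f'(r')>0$. Convexity on $[0,r']$, together with $f'(0^+)=-\infty<0<f'(r')$, produces a unique zero $r_0\in(0,r')$ of $f'$, with $f$ decreasing on $[0,r_0]$ and increasing on $[r_0,1/2]$. The location of the global maximum on $[0,1/2]$ is then decided by comparing $f(0)=0$ and $f(1/2)=q-h(p)$: the maximum is at $r=1/2$ when $h(p)<q$ and at $r=0$ when $4p(1-p)<q\le h(p)$. The only non-routine step is the closed-form identity for $\phi$; once it is in hand, everything else follows from elementary monotonicity arguments.
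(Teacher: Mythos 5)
Your proof is correct and reaches the same conclusions via a cleaner algebraic reduction than the paper's. The paper also works through $f''$, but it isolates the quadratic numerator $N(r)=(1-2p)^2qr^2-q(1-2p)^2r+p(1-q)(1-p)$, shows $N'<0$ to get monotonicity (which forces a separate treatment of $p=\tfrac12$, where $N'\equiv 0$), and then computes the discriminant $\Delta=q(1-2p)^2\bigl(q-4p(1-p)\bigr)$ to decide whether $f''$ changes sign. Your identity
\[
\phi(r)=\frac{(r*p)(1-r*p)}{r(1-r)}=(1-2p)^2+\frac{p(1-p)}{r(1-r)}
\]
short-circuits all of that: it immediately gives that $\phi$ is strictly decreasing on $(0,\tfrac12]$ with $\phi(0^+)=\infty$ and $\phi(\tfrac12)=1$, so the condition $f''\ge 0\iff\phi(r)\ge(1-2p)^2/(1-q)$ becomes a single threshold comparison, the boundary $q=4p(1-p)$ falls out of $(1-2p)^2/(1-q)\le 1$, and no special case for $p=\tfrac12$ or discriminant computation is needed. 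The subsequent monotonicity argument (convex-then-concave, $f'(0^+)=-\infty$, $f'(\tfrac12)=0$, hence a unique interior zero $r_0<r'$ of $f'$, with the global maximum decided by comparing $f(0)=0$ and $f(\tfrac12)=q-h(p)$) matches the paper's. One small point to make explicit: you need \emph{strict} concavity of $f$ on $(r',\tfrac12)$ (i.e., $f''<0$ there) to get $f'(r')>0$ strictly rather than $\ge 0$, which your identity does deliver since $\phi$ is strictly decreasing.
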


We can now state characterizations of the two relations.

\subsection{More capable with advantage}
\begin{pro}
Let $p,q\in(0,1)$. The following give the $\etamc$
values between $\BEC$ and $\BSC$:
\begin{enumerate}[1)]
    \item $\BEC_q+\max\big(0,q-h(p)\big)\mc \BSC_p$.
    \item If $q\le 4p(1-p)$, then $\BSC_p+h(p)-q\mc \BEC_q$.
    \item If $4p(1-p)< q$, then $\BSC_p+[-f(r_0)]\mc \BEC_q$.
\end{enumerate}
\end{pro}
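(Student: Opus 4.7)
The plan is to reduce each of the three claims to a one-dimensional optimization of the scalar function $f(r)=h(r*p)-h(p)-(1-q)h(r)$ already introduced in the section, and then read off the extrema from \Cref{pro1appx}. The machinery is \Cref{cl:mc-f-char}, which equates $\etamc(W,\Wtilde)$ with $-\inf_P f(W,\Wtilde)(P)$. Since both channels have binary input, a distribution $P$ on $\{0,1\}$ is determined by $r=P(X=1)\in[0,1]$, and for $X_r\sim\Ber(r)$ the standard computations $I(X_r;Y_{\BSC})=h(r*p)-h(p)$ and $I(X_r;Y_{\BEC})=(1-q)h(r)$ identify the mutual-information-difference function $f(W,\Wtilde)$ (in the direction $W=\BSC_p,\Wtilde=\BEC_q$) with the scalar $f(r)$ displayed above. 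The symmetry $f(r)=f(1-r)$ further restricts the optimization to $r\in[0,1/2]$.

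For item~1, the direction is $W=\BEC_q$, $\Wtilde=\BSC_p$, so the relevant scalar function is $-f$ and therefore $\etamc(\BEC_q,\BSC_p)=\sup_{r\in[0,1/2]}f(r)$. \Cref{pro1appx} locates this supremum at $r=1/2$ when $q>h(p)$, giving $f(1/2)=1-h(p)-(1-q)=q-h(p)$; and at $r=0$ in the two other regimes, giving $f(0)=0$. Combined, the value is $\max(0,q-h(p))$. For items~2 and~3 the direction is reversed, so $\etamc(\BSC_p,\BEC_q)=-\inf_{r\in[0,1/2]}f(r)$. If $q\le 4p(1-p)$, \Cref{pro1appx} says $f$ is decreasing on $[0,1/2]$, so the infimum equals $f(1/2)=q-h(p)$ and we obtain $\etamc=h(p)-q$. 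If $q>4p(1-p)$, both remaining cases of \Cref{pro1appx} exhibit $f$ as first decreasing then increasing, with an interior minimiser $r_0\in(0,1/2)$, giving $\etamc=-f(r_0)$.

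There is no substantial obstacle: once \Cref{pro1appx} is in hand, the proposition is an immediate corollary of \Cref{cl:mc-f-char} together with the binary-input reduction. The only point requiring care is bookkeeping the sign convention, since $f$ is defined with the BSC term positive; this is why the single shape analysis of \Cref{pro1appx} yields three distinct expressions depending on which channel sits on the left of $\mc$ and on the regime of $q$ relative to $h(p)$ and $4p(1-p)$.
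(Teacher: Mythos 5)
Your argument is correct and follows essentially the same route as the paper: reduce $\etamc$ to a one-dimensional extremization of $f(r)=h(r*p)-h(p)-(1-q)h(r)$ via \Cref{cl:mc-f-char}, then read off the location of the maximum (for item~1) and the minimum (for items~2 and~3) from \Cref{pro1appx}. No substantive difference from the paper's proof.
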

\begin{proof}
\begin{enumerate}[1)] 
    \item
    By~\Cref{cl:mc-f-char}, we have
    $\etamc(\BEC_q,\BSC_p)=\max_{0\le r\le 1}f(r)$.
    And from~\Cref{pro1appx}, the maximum of $f$ is either at $r=0$
    or $r=1/2$. So,
    $\etamc(\BEC_q,\BSC_p)=\max\big(f(0),f(1/2)\big)=
    \max\big(0, q-h(p)\big)$.
    \item 
    First, $\etamc(\BSC_p,\BEC_q)=-\min_{r}f(r)$
    by~\Cref{cl:mc-f-char}. If $q\le 4p(1-p)$, then
    $-\min_r f(r)=-f(1/2)=h(p)-q$.
    \item If $q>4p(1-p)$, then $-\min_r f(r)=-f(r_0)$. \qedhere
\end{enumerate}
\end{proof}

\subsection{Less noisy with advantage}
\begin{pro}
Let $p,q\in (0,1)$. The values of $\etaln$ are given by:
\begin{enumerate}[1)]
    \item $\BEC_q\lnoisy \BSC_p$ if $q\le 4p(1-p)$.
    \item $\BEC_q+q-h(p)-f(r_0)\lnoisy\BSC_p$ if $4p(1-p)<q$.
    \item If $q\le 4p(1-p)$, then $\BSC_p+h(p)-q\lnoisy\BEC_q$.
    \item If $4p(1-p)<q\le h(p)$, then
    $\BSC_p-f(r_0)\lnoisy\BEC_q$.
    \item If $h(p)<q$, then there exists $\eta> -f(r_0)$ such
    that $\BSC_p+\eta\lnoisy\BEC_q$.
\end{enumerate}
\end{pro}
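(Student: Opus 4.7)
The plan is to invoke \Cref{lem:ln-f-char} and \Cref{pro1appx} in each case. For parts 1) and 2) we work with the function $-f$, using that $(-f)^{\cave} = -f^{\con}$, so that $\etaln(\BEC_q,\BSC_p) = \sup_r \bigl(f(r) - f^{\con}(r)\bigr)$. For parts 3)--5) we directly use $\etaln(\BSC_p,\BEC_q) = \sup_r \bigl(f^{\cave}(r) - f(r)\bigr)$. Both reductions become purely analytic problems about the envelopes of $f$ on $[0,1]$, and we can exploit the symmetry $f(r)=f(1-r)$ throughout.

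In the convex regime $q \le 4p(1-p)$, \Cref{pro1appx} combined with symmetry shows that $f$ is convex on $[0,1]$, so $f^{\con}=f$ and part 1) is immediate. For part 3), one has $f(0)=f(1)=0$ and $f\le 0$ on $[0,1]$, and any concave $g\ge f$ must satisfy $g(0),g(1)\ge 0$, hence $g\ge 0$ on $[0,1]$ by concavity; this forces $f^{\cave}\equiv 0$, so $\sup_r(-f(r)) = -f(1/2) = h(p)-q$. Part 4) follows by the same zero-envelope argument: in the regime $4p(1-p)<q\le h(p)$ the global maximum of $f$ on $[0,1]$ is still $0$ (since $f(1/2) = q-h(p) \le 0$), so $f^{\cave}\equiv 0$ and $\sup_r(f^{\cave}(r)-f(r)) = -f(r_0)$.

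For part 2) we must identify the convex envelope for all $q > 4p(1-p)$. The candidate is
\[
f^{\con}(r) = \begin{cases} f(r) & r \in [0,r_0] \cup [1-r_0, 1], \\ f(r_0) & r \in [r_0, 1-r_0]. \end{cases}
\]
The middle piece is correct because rescaled convex combinations of $r_0$ and $1-r_0$ realize the value $f(r_0)$ on $[r_0,1-r_0]$, and this is optimal since $f(r_0)$ is the global minimum. The outer pieces are justified by showing that at each $r\in(0,r_0)$ the tangent line $\ell$ to $f$ at $r$ lies below $f$ on all of $[0,1]$: on $[0,r_0]$ by convexity of $f$ there, and on $[r_0,1]$ because $\ell$ has nonpositive slope (so $\ell(x) \le \ell(r_0)$ for $x\ge r_0$), $\ell(r_0) \le f(r_0)$ by convexity, and $f\ge f(r_0)$ everywhere. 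Taking the supremum gives $\sup_r(f(r)-f^{\con}(r)) = f(1/2)-f(r_0) = q-h(p)-f(r_0)$. Finally, for part 5) with $h(p) < q$ we have $f(1/2) = q-h(p) > 0$; evaluating the concave envelope at $r_0$ via the convex combination $(1-2r_0)\cdot 0 + 2r_0\cdot\tfrac{1}{2}$ yields $f^{\cave}(r_0) \ge 2r_0(q-h(p)) > 0$, whence $f^{\cave}(r_0)-f(r_0) > -f(r_0)$, giving the strict inequality.

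The main technical obstacle is the identification of $f^{\con}$ in part 2): one must rule out the possibility that mixing points from opposite halves of $[0,1]$ pushes the envelope strictly below $f$ on $[0,r_0]$. The tangent-line supporting argument hinges on $f'(r_0)=0$ and on $f(r_0)$ being the global minimum, both of which are provided by the $r\mapsto 1-r$ symmetry together with \Cref{pro1appx}.
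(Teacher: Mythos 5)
Your proposal is correct and follows essentially the same route as the paper: reduce via \Cref{lem:ln-f-char} to computing the gap between $f$ and its convex or concave envelope, and read off each case from the structure given by \Cref{pro1appx}. The only difference is that you supply an explicit tangent-line verification of the claimed form of $f^{\con}$ in case 2) and a direct two-point lower bound on $f^{\cave}(r_0)$ in case 5), both of which the paper states somewhat more tersely by appealing to the shape of the envelopes; your added detail is sound and fills those steps in rather than departing from the method.
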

\begin{proof}
Again, Figure \ref{f} can be consulted for illustration.
    \begin{enumerate}[1)]
        \item By \Cref{lem:ln-f-char}, 
        $\eta_{\mathrm{ln}}(\BEC_q,\BSC_p)=\max_{0\le r\le 1} 
        f(r)-f^{\con}(r)$, where $f^{\con}$ is the lower convex envelope
        of $f$ (see \Cref{f}). This is because of the known
        fact that $(-f)^{\cave}=-f^{\con}$. By \Cref{pro1appx},
        if $q\le 4p(1-p)$, then
        $f$ is convex and therefore $f=f^{\con}$ and $\eta_{\mathrm{ln}}=0$.
        \item If $4p(1-p)<q$, then from \Cref{pro1appx}:
        \begin{align*}
        f^{\con}(r)=\begin{cases}
        f(r)&\text{if $r\le r_0$ or $r\ge 1-r_0$,}\\
        f(r_0)&\text{otherwise.}
        \end{cases}
        \end{align*}
        Therefore, $\eta_{\mathrm{ln}}=\max_r f^{\con}(r)-f(r)=f(1/2)-f(r_0)
        =q-h(p)-f(r_0)$.
        \item By \Cref{lem:ln-f-char},
        $\eta_{\mathrm{ln}}(\BSC_p,\BEC_q)=\max_r f^{\cave}(r)-f(r)$.
        If $q\le 4p(1-p)$, then $f$ is convex, so its concave envelope is $f^{\cave}\equiv 0$ (the straight line from $f(0)$ to $f(1)$).       
        So by \Cref{pro1appx}, 
        $\eta_{\mathrm{ln}}=-\min_r f(r)=-f(1/2)=h(p)-q$.
        \item For $4p(1-p)< q\le h(p)$ we still have $f^{\cave}\equiv 0$, but $\underset{0\le r\le 1}{\max}\ -f(r)=-f(r_0)$, so
        $\eta_{\mathrm{ln}}=-\min_r f(r)=-f(r_0)$.        
        \item For $h(p)<q$, from \Cref{pro1appx} it can be seen that
        the concave envelope has the following form. There exists a unique
        point $r_1\in [r_0,1/2]$ such that $\frac{f(r_1)}{r_1}=f'(r_1)$. 
        Then:
        \begin{align*}
        f^{\cave}(r)=\begin{cases}
        \frac{f(r_1)}{r_1}r&\text{if $r\le r_1$,}\\
        f(r)&\text{if $r_1<r<1-r_1$,}\\
        \frac{f(r_1)}{r_1}(1-r)&,\text{if $r\ge 1-r_1$}.
        \end{cases}
        \end{align*}
        In other words, the envelope $f^{\cave}$ consists of a line from 
        $(0,0)$ to $(r_1,f(r_1))$ and then is equal to $f$ until $r=1/2$
        (and symmetrically for $r>1/2$).

        Since $f(1/2)>f(0)$, we can check that $f^{\cave}(r_1)>0$ and        
        $f^{\cave}(r_0)>0$. Consequently,
        \begin{align*}
        &\eta_{\mathrm{ln}}=\max_r f^{\cave}(r)-f(r)\ge f^{\cave}(r_0)-f(r_0)>-f(r_0)\;.
        \qedhere
        \end{align*}
    \end{enumerate}
\end{proof}

\section{List decoding}
\label{sec:list-decoding}
Recall that a binary code $\mathcal{C}$ with block length $n$
is called \emph{linear} if it is a linear subspace
of $\mathbb{F}_2^n$ and \emph{transitive} if its symmetry
group $\mathcal{G}(\mathcal{C})=\{\pi\in S_n:\pi\circ \mathcal{C} = \mathcal{C}\}$ is transitive, i.e., for every $1\le i,j\le n$, there exists $\pi\in\mathcal{G}(\mathcal{C})$ such that $\pi(i)=j$.
For $x,y\in\mathbb{F}_2^n$, we write
$\Delta(x,y)=\{1\le i\le n:x_i\ne y_i\}$ for the Hamming distance
between $x$ and $y$. We also write $x\le y$ if $x_i\le y_i$
for every coordinate. 

We call a function $f:\mathbb{F}_2^n\to\mathbb{R}$
increasing (resp.~decreasing) if $x\le y$ implies $f(x)\le f(y)$
(resp.~$f(x)\ge f(y)$).
We call $f$ transitive symmetric if its symmetry group
$\mathcal{G}(f)=\{\pi\in S_n:\pi\circ f=f\}$ is transitive.

This section is  devoted to proving Theorem \ref{thm4.1}.
The proof proceeds in two parts.
First, we provide an upper bound on the list-decoding probability of error using a similar strategy as in~\cite{hązła2022optimal}:
\begin{thm}
\label{thm4.2}
   Let $0<p\le\frac{1}{2}$, $0\le q\le 1$ and $\eta$ such that $\BSC_p+\eta\mc \BEC_q$. Let $\{\mathcal{C}_n\}$ be a family of binary codes. 

   Let $X\in\mathbb{F}^n_2$ be a random variable uniform over $\mathcal{C}_n$, $Y_{\BEC}$ and $Y_{\BSC}$ the output of transmitting $X$ over $\BEC_q$ and $\BSC_p$ respectively. Assume that $H(X|Y_{\BEC})=o(n)$.
   
   Then, for all $\delta>0$, there exists a list decoder $D:\mathbb{F}_2^n\to \mathcal{C}_n^L$ such that $$\Pr[X\notin D(Y_{\BSC})]\le P_e+o(1)$$ where $P_e=\frac{1-h(p)+\eta}{1-h(p)+\eta+\delta}<1$ and the size of the list is $L\le 2^{(\eta +\delta)n}$.
\end{thm}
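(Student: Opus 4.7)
My plan is to combine the tensorization of ``more capable with advantage'' with Markov's inequality applied to a suitably shifted random variable. First I would apply \Cref{mc-tensorize} to the hypothesis $\BSC_p+\eta\mc\BEC_q$ to obtain $\BSC_p^n+\eta n\mc\BEC_q^n$. Since the definition of $\mc$ is universal over input distributions, this specializes to $X$ uniform over $\mathcal{C}_n$ and yields
\[
I(X;Y_{\BSC})+\eta n\ge I(X;Y_{\BEC}),
\]
which, after cancelling $H(X)$, is equivalent to $H(X\mid Y_{\BSC})\le H(X\mid Y_{\BEC})+\eta n=\eta n+o(n)$ by the assumption $H(X\mid Y_{\BEC})=o(n)$.

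Next, I would use the natural threshold-MAP list decoder
\[
D(y):=\bigl\{x\in\mathcal{C}_n:\Pr[X=x\mid Y_{\BSC}=y]\ge 2^{-(\eta+\delta)n}\bigr\},
\]
whose list size is bounded by $L=2^{(\eta+\delta)n}$ because the posteriors sum to at most $1$. The failure event $\{X\notin D(Y_{\BSC})\}$ is exactly $\{-\log_2\Pr[X\mid Y_{\BSC}]>(\eta+\delta)n\}$, and I would bound it using Markov on the non-negative random variable
\[
Q:=-\log_2\Pr[X\mid Y_{\BSC}]+(1-h(p))n\ge 0,
\]
which satisfies $\EE Q=H(X\mid Y_{\BSC})+(1-h(p))n\le(1-h(p)+\eta)n+o(n)$. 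This gives
\[
\Pr[X\notin D(Y_{\BSC})]=\Pr\!\bigl[Q>(1-h(p)+\eta+\delta)n\bigr]\le\frac{(1-h(p)+\eta)n+o(n)}{(1-h(p)+\eta+\delta)n}=P_e+o(1),
\]
which is the claimed bound, and $P_e<1$ whenever $\delta>0$.

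The argument is essentially a one-line Markov computation, so there is no deep obstacle. The only substantive modelling choice is the Markov shift by $(1-h(p))n$, picked so that the resulting estimate takes exactly the form of $P_e$ in the statement; a bare-hands Markov on $-\log_2\Pr[X\mid Y_{\BSC}]$ alone would even yield the strictly sharper $\eta/(\eta+\delta)+o(1)$. For the downstream use in \Cref{thm4.1} the only feature that really matters is that $P_e$ is bounded away from $1$ as $n\to\infty$, so that the constant average failure probability can later be amplified to vanishing worst-case failure through the transitivity of the code and a sharp-threshold argument for Boolean functions.
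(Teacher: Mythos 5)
Your proof is correct, and it takes a genuinely different and arguably cleaner route than the paper's. The paper lower-bounds $H(Y_{\BSC})$ using the advantage inequality, introduces a typical set $\mathcal{L}$ of (codeword, noise-vector) pairs of near-typical weight, declares an output $y$ ``$\delta$-likely'' when $|\mathcal{B}_y\cap\mathcal{L}|>2^{(\eta+\delta)n}$, upper-bounds $H(Y_{\BSC})$ in terms of $\Pr[Y\text{ is $\delta$-likely}]$, and combines the two entropy bounds to control that probability; the decoder then outputs the typical pre-images of non-$\delta$-likely outputs, with Hoeffding handling the atypical-noise event. You instead bound $H(X\mid Y_{\BSC})\le \eta n+o(n)$ directly from tensorization, use the posterior-thresholding list decoder $D(y)=\{x:\Pr[X=x\mid Y_{\BSC}=y]\ge 2^{-(\eta+\delta)n}\}$, and apply Markov's inequality to the non-negative self-information $-\log_2\Pr[X\mid Y_{\BSC}]$; the list-size bound is automatic from $\sum_x\Pr[X=x\mid Y=y]\le 1$. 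This bypasses all typicality and concentration bookkeeping, and as you note, the unshifted Markov already yields the strictly stronger error bound $\eta/(\eta+\delta)+o(1)$, which is tighter than the stated $P_e$ whenever $h(p)<1$; the $(1-h(p))n$ shift is only a cosmetic device to reproduce the theorem's exact constant. Since the downstream use in \Cref{thm4.1} only needs the error probability to be bounded away from $1$, either constant works. One should just make $D(y)$ explicitly total (e.g.\ arbitrary on the measure-zero set of unreachable $y$), but this is standard.
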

Theorem \ref{thm4.2} shows the existence of a list-decoder with the required list size, but the error probability  is not vanishing,
but rather just bounded away from 1. To fix that, we will use a classical threshold result appearing in many works including \cite{kudekar2016reedmuller}. This will allow us to prove:

\begin{thm}
\label{Thm1}
   Let $\{\mathcal{C}_n\}$ be a family of transitive binary linear codes, and let $\delta,\varepsilon>0$.
   Let $D:\mathbb{F}_2^{n}\to\mathcal{C}_n^{L}$ be a $L$-list decoder for $\mathcal{C}_n$ for some $L=L(n)$. Let $X\in\F^n$ be a random uniform codeword in $\mathcal{C}_n$ and $Y_p$ the output of transmitting $X$ on the $\BSC_p$. 
   
   If $\Pr[X\notin D(Y_p)]\le 1-\varepsilon+o(1)$, then there exists a $L'$-list decoder $D':\mathbb{F}_2^{n}\to\mathcal{C}_n^{L'}$ (with $L'\le \frac{4}{\varepsilon}$L) such that,
   on the $BSC_{p-\delta}$,
   \begin{align*}
   \lim_{n\to\infty}\sup_{x\in\mathcal{C}_n}\Pr[X\notin D'(Y_{p-\delta})\;\vert\;X=x]=0.
   \end{align*}
\end{thm}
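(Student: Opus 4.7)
The plan is to combine a classical sharp-threshold theorem for monotone transitive-symmetric Boolean functions (of Friedgut--Kalai type, as used in \cite{kudekar2016reedmuller}) with a shift-based symmetrization that exploits the linearity of $\mathcal{C}_n$. Throughout, write $E_x := \{z \in \mathbb{F}_2^n : x \in D(x+z)\}$ and $P_{\mathrm{succ}}(x) := \mu_p(E_x)$, where $\mu_p$ denotes the product $\Ber(p)$ measure on $\mathbb{F}_2^n$.

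First I would shift the problem so that the zero codeword is ``good''. Since $X$ is uniform over $\mathcal{C}_n$, the hypothesis is equivalent to $\EE_x[P_{\mathrm{succ}}(x)] \ge \varepsilon - o(1)$, so some $x_0 \in \mathcal{C}_n$ has $P_{\mathrm{succ}}(x_0) \ge \varepsilon - o(1)$. Using linearity, replacing $D$ by the shifted decoder $\bar{D}(y) := D(y+x_0) + x_0$ (same list size $L$) I may assume $\mu_p(E_0) \ge \varepsilon - o(1)$.

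Next I would symmetrize and monotonize. The group $\mathcal{G} := \mathcal{G}(\mathcal{C}_n)$ acts transitively on $[n]$ and fixes $0 \in \mathcal{C}_n$ by linearity, so defining
\[
    \widehat{E}_0 := \bigcup_{\pi \in \mathcal{G}} \pi(E_0), \qquad G := \widehat{E}_0^{\downarrow}
\]
(the coordinate-wise downward closure) gives a set that is $\mathcal{G}$-invariant, hence transitive-symmetric as an indicator, and monotone decreasing, with $\mu_p(G) \ge \mu_p(E_0) \ge \varepsilon - o(1)$. Applying the sharp-threshold theorem to the complementary monotone-increasing transitive-symmetric event $\mathbb{F}_2^n \setminus G$ (whose $\mu_p$-measure is bounded away from $1$) yields $\mu_{p - \delta}(G) \to 1$ as $n \to \infty$ for every fixed $\delta > 0$.

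Finally I would use $G$ to construct $D'$. A natural candidate is $D'(y) := \bigcup_{i=1}^{k} \pi_i^{-1} D(\pi_i y)$ for $k := \lceil 4/\varepsilon \rceil$ permutations $\pi_1,\ldots,\pi_k \in \mathcal{G}$, which has list size at most $kL \le (4/\varepsilon) L$. A probabilistic/covering argument (using that by Markov's inequality at least a $\varepsilon/2$ fraction of codewords satisfy $P_{\mathrm{succ}}(x) \ge \varepsilon/2$) is used to choose the $\pi_i$'s so that for every $x \in \mathcal{C}_n$ the per-codeword success event of $D'$ contains a suitably permuted/shifted copy of $G$, inheriting $\mu_{p-\delta}(G) \to 1$ uniformly in $x$. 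The main obstacle I anticipate is this last step: marrying the \emph{constant} list-size blow-up $4/\varepsilon$ to a \emph{uniform-over-$x$} conclusion requires a careful handling of the orbit structure of $\mathcal{G}$'s action on $\mathcal{C}_n$, ensuring that every orbit meets the ``good'' set $\{x : P_{\mathrm{succ}}(x) \ge \varepsilon/2\}$ so that at least one of the $k$ shifted copies of $D$ supplies a non-negligible success-event measure to which the sharp-threshold amplification applies.
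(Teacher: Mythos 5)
The proposal identifies the right tools (monotonicity, transitive symmetry, the KKL/sharp-threshold theorem, and linearity) but has a genuine gap at the step you yourself flag as ``the main obstacle'', and the envisioned fix does not work. Your construction $G := \widehat{E}_0^{\downarrow}$ (union over all $\pi\in\mathcal{G}$, then downward closure) is indeed monotone, $\mathcal{G}$-invariant, and of measure $\ge\varepsilon - o(1)$; but it is \emph{not} the success event of any list decoder with list size $O(L/\varepsilon)$ --- the union costs a factor $|\mathcal{G}|$ (which can be enormous) and the downward closure is not implementable by any decoder at all without potentially unbounded list-size blow-up. Conversely, the candidate $D'(y) := \bigcup_{i=1}^{k} \pi_i^{-1} D(\pi_i y)$ for $k = \lceil 4/\varepsilon\rceil$ \emph{does} control the list size, but its success event is a union of only $k$ translates of $E_0$; it is neither transitive-symmetric (so Theorem~\ref{thm19} doesn't apply to it) nor is it guaranteed to contain $G$ (so the conclusion $\mu_{p-\delta}(G)\to 1$ doesn't transfer). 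There is no evident way to simultaneously retain transitive symmetry, monotonicity, small list size, and uniform-over-codewords control by group averaging of an arbitrary decoder.

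The paper sidesteps all of this by never averaging or symmetrizing by hand: it passes from $D$ to the \emph{randomized MAP list decoder} $D_{\MAP}$ (which can only improve the success probability) and observes that for a \emph{linear} code, $\Pr[x\in D_{\MAP}(x+Z)]$ is the same for every $x$ (Claim~\ref{clm4.5}), giving per-codeword uniformity for free --- no need to select a good $x_0$ or to argue about orbits. It then thresholds the randomization (Definition~\ref{defD'}) to get a deterministic $D_{\varepsilon/4}$; Lemma~\ref{prop19} shows this keeps $\Pr[0\in D_{\varepsilon/4}(Z_p)]\ge\varepsilon/4$, and Claim~\ref{list-size} shows the list size grows by a factor of exactly $4/\varepsilon$ (this is where your $4/\varepsilon$ should come from, not from $k$ permutations). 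Crucially, for a transitive linear code the indicator $y\mapsto\mathbbm{1}[0\in D_\alpha(y)]$ is \emph{automatically} monotone (Lemma~\ref{pro20}, via Lemma~\ref{pro3}) and transitive-symmetric (Lemma~\ref{pro21}), so the sharp-threshold theorem applies directly, and Claim~\ref{prop18} converts the $0$-codeword statement back into the required uniform-over-$x$ conclusion. If you want to repair your argument, you should replace the averaging/monotonization step with the observation that the MAP decoder already has all the required symmetry and monotonicity built in.
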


We will now prove \Cref{thm4.1} assuming
\Cref{thm4.2} and \Cref{Thm1}. Then, we proceed 
with the proofs of those two theorems. We first need
a claim about continuity of $\etamc$, which is proved in the appendix.

\begin{clm}
\label{clm4.3}
    Let $0\le q\le 1$, then the function $p\longmapsto \etamc(\BSC_p,\BEC_q)$ is continuous on $]0,\frac{1}{2}]$.
\end{clm}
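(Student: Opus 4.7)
The plan is to reduce the claim to joint continuity of the mutual information difference function and apply a standard compactness argument. Throughout, make the $p$-dependence of $f$ from \Cref{def f} (specialized to the BSC/BEC) explicit by writing
\[
f_p(r)=h(r*p)-h(p)-(1-q)h(r).
\]
By \Cref{cl:mc-f-char} one has $\etamc(\BSC_p,\BEC_q)=-\min_{r\in[0,1]} f_p(r)$, so it suffices to show that $p\mapsto \min_{r\in[0,1]} f_p(r)$ is continuous on $(0,1/2]$.

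First I would establish that $f_p(r)$ converges to $f_{p'}(r)$ uniformly in $r$ as $p\to p'$. Writing
\[
|f_p(r)-f_{p'}(r)|\le |h(r*p)-h(r*p')|+|h(p)-h(p')|,
\]
and using $|r*p-r*p'|=|(1-2r)(p-p')|\le|p-p'|$, this reduces to uniform continuity of the binary entropy $h$ on the compact interval $[0,1]$. Given any $\varepsilon>0$, pick $\delta>0$ witnessing uniform continuity of $h$; then $|p-p'|<\delta$ forces $|f_p(r)-f_{p'}(r)|<2\varepsilon$ for every $r\in[0,1]$.

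Second, I would turn uniform convergence into continuity of the minimum. For any $p,p'\in(0,1/2]$ with $|p-p'|<\delta$, and any $r$,
\[
f_{p'}(r)-2\varepsilon\le f_p(r)\le f_{p'}(r)+2\varepsilon,
\]
so taking infima over $r$ (which is attained on the compact $[0,1]$) gives
\[
|\min_r f_p(r)-\min_r f_{p'}(r)|\le 2\varepsilon.
\]
Multiplying by $-1$ yields $|\etamc(\BSC_p,\BEC_q)-\etamc(\BSC_{p'},\BEC_q)|\le 2\varepsilon$, which is the required continuity on $(0,1/2]$.

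There is essentially no obstacle; the only mild subtlety is that $h$ is not Lipschitz at the endpoints, so one should invoke uniform continuity on $[0,1]$ rather than a derivative bound. (Nothing else is needed, and in fact the same argument extends continuity to the closed interval $[0,1/2]$, though the statement only claims it on $(0,1/2]$.)
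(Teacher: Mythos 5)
Your proof is correct, and it follows a genuinely different — arguably cleaner — route in the final step. Both you and the paper start from the characterization $\etamc(\BSC_p,\BEC_q)$ as an extremum of $f_p$ over $r\in[0,1]$ (via \Cref{cl:mc-f-char}), and both pass the difference of the extrema through the elementary inequality $\bigl|\inf_r f_p(r)-\inf_r f_{p'}(r)\bigr|\le\sup_r\bigl|f_p(r)-f_{p'}(r)\bigr|$. Where you diverge is in how the right-hand side is controlled. The paper restricts to a compact interval $I=[p-\delta_0,p+\delta_0]\cap[0,1/2]$ bounded away from $0$ and uses a mean-value/derivative bound, $\sup_{x\in I}|h'(x)|+\sup_{x\in I,\,r\in[0,1]}\bigl|\tfrac{d}{dx}h(r*x)\bigr|\le C$, to obtain a local Lipschitz estimate $C|\delta|$; the restriction to $I$ is essential there because $h'$ blows up at the endpoint $0$. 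You instead note that $|r*p-r*p'|=|1-2r|\,|p-p'|\le|p-p'|$ and invoke uniform continuity of $h$ on all of $[0,1]$, which needs no restriction to a subinterval and, as you observe, in fact yields continuity on the closed interval $[0,1/2]$. The trade-off is that the paper's argument gives a quantitatively stronger Lipschitz modulus near any interior $p$, whereas yours is more elementary, avoids the endpoint caveat entirely, and proves a marginally more general statement; for the purposes of this claim (continuity, used only qualitatively in the proof of \Cref{thm4.1}), either is sufficient.
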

\begin{proof}[Proof of Theorem \ref{thm4.1}]
    Let $0<p<\frac{1}{2}, 0< q< 1$ and $\eta$ such that $\BSC_p+\eta\mc \BEC_q$.
    Let $\mathcal{C}=\{\C_n\}$ be a family of transitive linear binary codes with increasing block lengths. Let $X\in\mathbb{F}_2^n$ be a random variable uniform over $\C_n$ and $Y_{\BEC}$ and $Y_{\BSC}$ the output of transmitting $X$ over $\BEC_q$ and $\BSC_{p}$, respectively. Let's also assume that $H(X|Y_{\BEC})=o(n)$. 

    Let $\varepsilon>0$. Let $\etamc(p,q)=\etamc(\BSC_p,\BEC_q)$.
    By \Cref{clm4.3}, there exists $\delta>0$ such that $\etamc(p+\delta,q)\le \etamc(p,q)+\frac{\varepsilon}{3}$.
    Let us apply Theorem \ref{thm4.2} for $\BSC_{p+\delta}$,
    $\etamc(p+\delta,q)$ and $\varepsilon/3$. From there, there exist
    $0\le P_e<1$ and a list-decoder $D_0: 
     \mathbb{F}_2^n\to\mathcal{C}_n^{L_0}$, with 
     \begin{align*}
     L_0\le 2^{\big(\etamc(p+\delta,q)+\frac{\varepsilon}{3}\big)n}
     \le 2^{\left(\etamc(p,q)+\frac{2\varepsilon}{3}\right)n}
     \le 2^{\left(\eta+\frac{2\varepsilon}{3}\right)n}
     \end{align*}
     such that \begin{equation*}
        \Pr[X\notin D_0(Y_{\BSC_{p+\delta}})]\le P_e+o(1).
    \end{equation*}
Let's write $P_e=1-\varepsilon'$, such that $\Pr[X\notin D_0(Y_{\BSC_{p+\delta_0}})]\le 1-\varepsilon'+o(1)$. Then by Theorem \ref{Thm1}, there exists a list-decoder $D:\mathbb{F}_2^n\to\mathcal{C}_n^L$ with $L\le \frac{4}{\varepsilon'}L_0$, 
\begin{equation}
\label{eq4}
    \lim_{n\to \infty}\sup_{x\in\C_n}\Pr[X\notin D(Y_{\BSC_{p}})|X=x]=0.
\end{equation}
Since for $n$ large enough $L\le\frac{4}{\varepsilon'}L_0\le 2^{(\eta+\varepsilon)n}$
(and for smaller $n$ we can use any other decoder without affecting the limit
in~\eqref{eq4}), the conclusion follows.
\end{proof}

\subsection{Proof of \Cref{thm4.2}}
Let's denote $Y:=Y_{\BSC}=X\oplus Z$ with $Z$ which is i.i.d. $\Ber(p)$ and independent of $X$. Furthermore, let $R_n$, be the rate of $\mathcal{C}_n$.
We have 
\begin{align*}
    H(Y)&=H(X,Y)-H(X|Y)
    \\&=H(X,Z)-H(X|Y)
    \\&=H(X)+H(Z)-H(X|Y),&\ \text{because $X$ and $Z$ are independent,}
    \\&\ge R_nn +h(p)n-\eta n-H(X|Y_{\BEC}),&\ \text{since $\BSC_p+\eta\mc \BEC_q$,}
    \\&=R_nn +h(p)n-\eta n- o(n),&\ \text{since $H(X|Y_{\BEC})=o(n)$,}
    \\&=(R_n +h(p)-\eta) n- o(n),
\end{align*}
so 
\begin{align}
\label{eq1}
    H(Y)\ge(R_n +h(p)-\eta) n- o(n).
\end{align}
Let $\mathcal{L} := \{(x, z) : x \in \mathcal{C}_n, z\in\mathbb{F}_2^n, \wt(z) < p n + n^{\frac{3}{4}}\}$, and for $y\in \mathbb{F}_2^n$, let $\mathcal{B}_y:=\{(x, z) : x \in \mathcal{C}_n, z\in\mathbb{F}_2^n: x\oplus z=y\}$. Let $\delta>0$ fixed, we will say that $y\in \mathbb{F}_2^n$ is $\delta$-likely if the size of the set $\mathcal{B}_y\cap\mathcal{L}$ is greater than $2^{(\eta+\delta)n}$. Let $P_n(\delta)$ be the probability that the random string $Y$ is $\delta$-likely. If $(x,z)\in\mathcal{L}$, we have 
\begin{align*}
    \Pr[X=x,Z=z]&=\Pr[X=x]\Pr[Z=z]
    \\&\ge 2^{-R_n n}p^{p n+n^{\frac{3}{4}}}(1-p)^{(1-p)n-n^{\frac{3}{4}}}
    \\&=2^{-R_n n} 2^{-h(p)n}\big(\frac{p}{1-p}\big)^{n^{\frac{3}{4}}}.
\end{align*}
If $y\in\mathbb{F}_2^n$ is $\delta$-likely then
\begin{align*}
    \Pr[Y=y]&\ge\sum\limits_{(x,z)\in \mathcal{B}_y\cap\mathcal{L}}\Pr[X=x,Z=z]
    \\&\ge 2^{n(\eta+\delta)}\times 2^{-n(R_n+h(p))} \big(\frac{p}{1-p}\big)^{n^{\frac{3}{4}}}
    \\&=2^{-n(R_n+h(p)-\eta-\delta)} \big(\frac{p}{1-p}\big)^{n^{\frac{3}{4}}},
\end{align*}
so we get an upper bound on the entropy of $Y$ as follows
\begin{align*}
    H(Y)&\le P_n(\delta)\big(n(R_n+h(p)-\eta-\delta)+n^{\frac{3}{4}}\log(\frac{1-p}{p})\big)+\sum\limits_{y\in\mathbb{F}_2^n,\ y\ \text{not}\ \delta\text{-likely} }\Pr[Y=y]\log\frac{1}{\Pr[Y=y]}
    \\&\le P_n(\delta)\big(n(R_n+h(p)-\eta-\delta)+n^{\frac{3}{4}}\log(\frac{1-p}{p})\big)+ n(1-P_n(\delta))+1,\ \text{same as in the proof of Lemma 18 in \cite{hązła2022optimal}},
    \\&\le n-nP_n(\delta)\big(1-R_n-h(p)+\eta+\delta\big)+ P_n(\delta)n^{\frac{3}{4}}\log(\frac{1-p}{p})+1,
    \\&=n-nP_n(\delta)\big(1-R_n-h(p)+\eta+\delta\big) +o(n).
\end{align*}
So we have 
\begin{align}
   H(Y)&\le n-nP_n(\delta)\big(1-R_n-h(p)+\eta+\delta\big) +o(n),
   \nonumber
   \\&\le n+nR_n-nP_n(\delta)(1-h(p)+\eta+\delta)+o(n),
    \label{equ2}
\end{align}
then using the Inequalities \eqref{eq1} and \eqref{equ2} we get
\begin{align*}
    (R_n +h(p)-\eta) n- o(n)\le H(Y)\le n+nR_n-nP_n(\delta)\big(1-h(p)+\eta+\delta\big) +o(n).
\end{align*}
Then, it holds asymptotically
\begin{align*}
    P_n(\delta)\le \frac{1-h(p)+\eta}{1-h(p)+\eta+\delta}+o(1).
\end{align*}
Now consider a list decoder $D$ such that, for each $y\in\mathbb{F}_2^n$ which is not $\delta$-likely, it outputs a list of all codewords $x\in\mathcal{C}_n$ such that $(x,x+y)\in\mathcal{L}\cap\mathcal{B}_y$. So if $Y$ is not $\delta$-likely and $(X,Z)\in\mathcal{L}$, then the decoder $D$ is successful and $X\in D(Y)$. 
Therefore we have 
\begin{align*}
    \Pr[X\notin D(Y)]&=\Pr[Y\ \text{is $\delta$-likely or $(X,Z)\notin\mathcal{L}$} ]
    \\&\le\Pr[Y\ \text{is $\delta$-likely}]+\Pr[(X,Z)\notin\mathcal{L}]
    \\&\le \frac{1-h(p)+\eta}{1-h(p)+\eta+\delta}+o(1)+\Pr[\wt(Z)\ge p n + n^{\frac{3}{4}} ]
    \\ &= \frac{1-h(p)+\eta}{1-h(p)+\eta+\delta}+ o(1),
\end{align*}
since $\Pr[\wt(Z)\ge p n + n^{\frac{3}{4}} ]=o(1)$ by the Hoeffding's inequality.
Since by construction it is easy to see that $D$ outputs at
most $2^{(\eta+\delta)n}$ codewords, it is the desired list decoder.
\qed

\subsection{Proof of \Cref{Thm1}}
The proof of \Cref{Thm1} requires some preliminary results that we will give first. In the following
let $\mathcal{C}$ be a binary linear code.
It is well known (and not hard to see) that the error probability on the BSC for $L$-list decoding is minimized by the following randomized MAP decoder:
\begin{defn}[MAP decoder for $L$-list decoding]
\label{def1}
    Let $y\in\mathbb{F}_2^n$, and $N_d=|\{x\in\C , \Delta(x,y)=d\}|$, let $d^{*}=d^*(y)$ be the smallest $d$ such that $\sum\limits_{i=0}^{d}N_i\ge L$. Let's denote $$S_1(y)=\{x\in\C, \Delta(x,y)<d^{*}\},$$ $$S_2(y)=\{x\in\C, \Delta(x,y)=d^{*}\},$$ $$W(y)=L-|S_1(y)|.$$ 
    The randomized list $\MAP$-decoding of $y$ is given by $D_{\MAP}(y)=S_1(y)\cup T(y)$, where $T(y)$ is obtained by choosing $W(y)$ elements in $S_2(y)$ uniformly at random.
\end{defn}
Accordingly, in the proof of Theorem~\ref{Thm1} we will assume without loss of generality that
$D$ is the decoder from Definition~\ref{def1}.
\begin{clm}
\label{clm1}
    $\forall x\in\C,\forall z\in\F^n, 0\in S_1(z)$ 
    if and only if $x \in S_1(x+z)$ (and similarly for $S_2$).
\end{clm}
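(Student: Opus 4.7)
The proof is essentially bookkeeping once one exploits the two obvious symmetries: the translation invariance of Hamming distance, and the fact that a linear code $\mathcal{C}$ is preserved under translations by its own elements. I would proceed as follows.

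First I would note that for any $x \in \mathcal{C}$ the map $x' \mapsto x' + x$ is a bijection of $\mathcal{C}$ onto itself, because $\mathcal{C}$ is a linear subspace. Combined with the identity $\Delta(x' + x, z + x) = \wt(x' + x + z + x) = \wt(x' + z) = \Delta(x', z)$ valid in $\mathbb{F}_2^n$, this immediately gives, for every $d \ge 0$,
\begin{align*}
N_d(x + z) \;=\; \big|\{x' \in \mathcal{C} : \Delta(x', x+z) = d\}\big| \;=\; \big|\{x'' \in \mathcal{C} : \Delta(x'', z) = d\}\big| \;=\; N_d(z),
\end{align*}
by the substitution $x'' = x' + x$. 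Consequently the smallest $d$ such that $\sum_{i=0}^{d} N_i(\cdot) \ge L$ is the same at $z$ and at $x + z$, i.e.\ $d^*(x+z) = d^*(z)$.

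With this in hand, the claim is immediate. We have $0 \in S_1(z)$ iff $\Delta(0, z) < d^*(z)$, i.e.\ iff $\wt(z) < d^*(z)$. On the other side, $x \in S_1(x + z)$ iff $\Delta(x, x+z) < d^*(x+z)$, which by the computations above is iff $\wt(z) < d^*(z)$. The two conditions coincide. Replacing the strict inequality by equality gives the same conclusion for $S_2$.

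The only thing to check is that $x \in \mathcal{C}$ so that $x$ is a legitimate candidate for membership in $S_1(x+z)$ and $S_2(x+z)$, which is part of the hypothesis. There is no real obstacle here; the content of the claim is the observation that linearity of $\mathcal{C}$ together with translation invariance of the Hamming distance makes the distance spectrum of $\mathcal{C}$ seen from $y$ depend only on $y$ modulo $\mathcal{C}$, so in particular $d^*$ is constant on cosets of $\mathcal{C}$ and the membership relations transport accordingly.
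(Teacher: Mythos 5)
Your proof is correct and rests on the same two observations the paper uses: translation invariance of Hamming distance and the fact that $c \mapsto c + x$ bijects $\mathcal{C}$ when $x \in \mathcal{C}$. The only cosmetic difference is that you first record the invariance of $N_d$ and hence of $d^*$ under translation by a codeword and then unwind the definition of $S_1$, whereas the paper applies the bijection directly to the cardinality condition $|\{c\in\mathcal{C}:\Delta(c,z)\le\Delta(0,z)\}|\ge L$ that characterizes $0\notin S_1(z)$; the underlying argument is the same.
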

\begin{proof}
    Let $x\in\C$ and $z\in\F^n$,
    \begin{align*}
        0\notin S_1(z) &\iff |\{c\in\C, \Delta(c,z)\le\Delta(0,z)\}|\ge L,\ \text{by Definition \ref{def1}}
        \\&\iff |\{c\in\C, \Delta(c+x,z+x)\le\Delta(x,z+x)\}|\ge L
        \\&\iff x\notin S_1(x+z),\ \text{because $c+x$ is a codeword since $\C$ is linear.}
    \end{align*}
The proof for $S_2(z)$ is similar.
\end{proof}
\begin{clm}
\label{clm4.5}
    $\forall x\in\C, \forall z\in\mathbb{F}_2^n$, $$\Pr[0\in D_{\MAP}(z)]=\Pr[x\in D_{\MAP}(x+z)].$$
\end{clm}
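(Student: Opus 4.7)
The plan is to do a case analysis based on where $0$ lies relative to the decision sets $S_1(z)$ and $S_2(z)$, using the symmetry coming from linearity of $\C$.

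The key observation is that for any $x \in \C$, the map $c \mapsto c+x$ is a bijection of $\C$ onto itself (linearity) and preserves Hamming distances in the sense that $\Delta(c, z) = \Delta(c+x, z+x)$. Consequently the distance profile of $\C$ seen from $z$ matches the distance profile seen from $x+z$: one has $N_d(z) = N_d(x+z)$ for every $d$, so $d^\ast(z) = d^\ast(x+z)$, $|S_1(z)| = |S_1(x+z)|$, $|S_2(z)| = |S_2(x+z)|$ and therefore $W(z) = W(x+z)$. Moreover the bijection $c \mapsto c+x$ sends $S_1(z)$ onto $S_1(x+z)$ and $S_2(z)$ onto $S_2(x+z)$.

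Now I would split into three cases. If $0 \in S_1(z)$, then by \Cref{clm1} we have $x \in S_1(x+z)$, hence both events occur deterministically and each probability equals $1$. If $0 \notin S_1(z) \cup S_2(z)$, then by \Cref{clm1} (and its $S_2$ analogue, which is proved the same way) we get $x \notin S_1(x+z) \cup S_2(x+z)$, so both events have probability $0$. The only substantive case is $0 \in S_2(z)$, in which case $x \in S_2(x+z)$ by the $S_2$-analogue of \Cref{clm1}; since $T(z)$ is a uniformly random $W(z)$-subset of $S_2(z)$, one has
\begin{equation*}
\Pr[0 \in D_{\MAP}(z)] \;=\; \Pr[0 \in T(z)] \;=\; \frac{W(z)}{|S_2(z)|} \;=\; \frac{W(x+z)}{|S_2(x+z)|} \;=\; \Pr[x \in T(x+z)] \;=\; \Pr[x \in D_{\MAP}(x+z)],
\end{equation*}
using the equalities established from the bijection $c \mapsto c+x$.

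There is no real obstacle here: the claim is essentially a bookkeeping fact that linearity makes the MAP list decoder equivariant under codeword shifts. The only mild subtlety is making explicit that the tie-breaking randomization in the definition of $T(y)$ is symmetric in the elements of $S_2(y)$, so the uniform-subset probability $W(y)/|S_2(y)|$ is the same for $0$ in $S_2(z)$ as for $x$ in $S_2(x+z)$; this is immediate from the fact that these two sets are in bijection via $c \mapsto c+x$ with matching sizes and matching values of $W$.
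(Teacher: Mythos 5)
Your proof is correct and follows essentially the same route as the paper, which simply cites Claim~\ref{clm1} and leaves the rest implicit; you have spelled out the missing bookkeeping (the bijection $c\mapsto c+x$ giving $N_d(z)=N_d(x+z)$, hence $d^\ast$, $|S_1|$, $|S_2|$, and $W$ all invariant) and the resulting three-case analysis, which is exactly what the paper's one-line proof relies on.
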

\begin{proof}
   This follows by Claim \ref{clm1}.
\end{proof}
\begin{lem}
\label{pro3}
    For $y\in\mathbb{F}_2^n$, let's define  $P(y):=\Pr[0\in D_{\MAP}(y)]$. The function $P(y)$ is a decreasing function on $\mathbb{F}_2^n$.
\end{lem}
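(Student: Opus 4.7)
The plan is a coupling argument. Since the partial order on $\F^n$ is generated by single-coordinate flips $0\mapsto 1$, it suffices to show $P(y)\ge P(y')$ whenever $y'=y\oplus e_i$ with $y_i=0$. To handle the randomized tie-breaking of \Cref{def1} cleanly, I first reformulate $D_{\MAP}$ in an equivalent way. Draw independent uniforms $(U_x)_{x\in\C}$ with $U_x\sim\mathrm{Unif}[0,1]$, and let the decoder output the $L$ codewords minimizing the lexicographic key $(\Delta(x,y),U_x)$. Conditional on the distance profile this produces the same distribution over $L$-lists as \Cref{def1}: the $W(y)$ uniformly random elements of $S_2(y)$ are equidistributed with the $W(y)$ elements of $S_2(y)$ having the smallest $U_x$. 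Under this formulation, $0\in D_{\MAP}(y)$ iff
\[
S(y):=\bigl\{x\in\C\setminus\{0\} : (\Delta(x,y),U_x)\prec(\wt(y),U_0)\bigr\}
\]
has size less than $L$.

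Now couple the two instances by reusing the \emph{same} realisation of $(U_x)_{x\in\C}$ for both $D_{\MAP}(y)$ and $D_{\MAP}(y')$. The key claim is $S(y)\subseteq S(y')$. To prove it, partition $\C$ into $\C^{(b)}=\{x\in\C:x_i=b\}$ for $b\in\{0,1\}$. For $x\in\C^{(0)}$ one has $\Delta(x,y')=\Delta(x,y)+1$ and $\wt(y')=\wt(y)+1$, so both entries of the lex key shift by the same amount and the comparison with $0$ is preserved: $x\in S(y)\iff x\in S(y')$. For $x\in\C^{(1)}$, instead, $\Delta(\cdot,y)$ decreases by one while $\wt$ increases by one; therefore any $x\in S(y)\cap\C^{(1)}$ satisfies $\Delta(x,y)\le\wt(y)$ and hence $\Delta(x,y')=\Delta(x,y)-1<\wt(y')$, forcing $x\in S(y')$ (the tie-breaker is no longer required).

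Combining both cases gives $S(y)\subseteq S(y')$, so $|S(y')|<L$ implies $|S(y)|<L$; in other words, under the coupling the event $\{0\in D_{\MAP}(y')\}$ is contained in $\{0\in D_{\MAP}(y)\}$. Passing to probabilities yields
\[
P(y')=\Pr[|S(y')|<L]\le \Pr[|S(y)|<L]=P(y),
\]
which is the desired monotonicity. The main step to verify carefully is the equivalence of the $U_x$ reformulation with the decoder of \Cref{def1}; after that, the argument is a short two-case verification based solely on the partition of $\C$ by the $i$-th coordinate, and no structural hypotheses on $\C$ (linearity, transitivity) are needed beyond those already present in \Cref{def1}.
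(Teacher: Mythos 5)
Your proof is correct and takes a genuinely different route from the paper's. The paper argues directly from the explicit formula $P(y)=\frac{L-A_y}{B_y-A_y}$ when $0\in S_2(y)$, where $A_y=|\{x\in\C:\Delta(x,y)<\Delta(0,y)\}|$ and $B_y=|\{x\in\C:\Delta(x,y)\le\Delta(0,y)\}|$; it establishes (via a simple distance claim) that $A_{y_1}\le A_{y_2}$ and $B_{y_1}\le B_{y_2}$ whenever $y_1\le y_2$, and then checks $P(y_1)\ge P(y_2)$ through a four-case analysis on how $0$ sits relative to $S_1$ and $S_2$. You instead externalize the tie-breaking randomness as i.i.d.\ uniforms $(U_x)$, replace the randomized MAP decoder by a deterministic top-$L$ rule in the lexicographic order on $(\Delta(x,y),U_x)$, and run a coupling: reusing the same $(U_x)$ for $y$ and $y'=y\oplus e_i$, you show the set $S(y)$ of codewords strictly beating $0$ can only grow, by a two-case parity argument on $x_i$. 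Both proofs are valid; yours avoids the case split entirely and yields a \emph{pathwise} monotonicity statement (the events are nested under the coupling, not just ordered in probability), while the paper's is more elementary in that it introduces no auxiliary randomness. The one step you flagged as needing care --- equivalence of the lex-ordering decoder to \Cref{def1} --- does hold: conditional on the distance profile, the $W(y)$ members of $S_2(y)$ with the smallest $U_x$ form a uniform $W(y)$-subset by exchangeability, and $U$-ties occur with probability zero. Neither proof needs linearity of $\C$.
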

\begin{proof}
By definition of $D_{\MAP}$, we have
    $$P(y)=\begin{cases}
        1&\text{if}\ 0\in S_1(y),
        \\ \frac{W(y)}{|S_2(y)|}&\text{if}\ 0\in S_2(y),
        \\0 & \text{otherwise.}
    \end{cases}$$
    Let $y_1, y_2\in\F^n$ such that $y_1\le y_2$, we want to prove that $P(y_1)\ge P(y_2)$. Let's consider for a given $y$ the numbers $$A_{y}=|\{x\in\C, \Delta(x,y)<\Delta(0,y)\}|\ \text{and}\ B_{y}=|\{x\in\C, \Delta(x,y)\le\Delta(0,y)\}|.$$ 
    We now state a simple claim:
    \begin{clm}
\label{lem2}
     Let $y_1, y_2\in\F^n$ such that $y_1\le y_2$. If there exists $c\in \C$ such that $\Delta(c,y_1)\le \Delta(0,y_1)$ then we also have $\Delta(c,y_2)\le \Delta(0,y_2)$.
\end{clm}
    By the definition of $S_1$ and $S_2$ (Definition \ref{def1}) and 
    \Cref{lem2}, the following statements are true: 1) $0\in S_1(y)$ if and only if $B_y<L$. 2) 
    $0\in S_1(y)\cup S_2(y)$ if and only if  $A_y<L$. 3) $A_{y_1}\le A_{y_2}$ and $B_{y_1}\le B_{y_2}$. 4) If $0\in S_2(y)$ then $A_y=|S_1(y)|$ and $B_y=|S_1(y)|+|S_2(y)|$.

For the proof of the proposition we proceed through all possible cases: 
\begin{itemize}
    \item If $0\in S_1(y_1)$ or $0\notin S_1(y_2)\cup S_2(y_2)$, we obviously have $P(y_1)\ge P(y_2)$.
    \item If $0\notin S_1(y_1)$, then by 1) and 3), it follows that $0\notin S_1(y_2)$.
    \item If $0\in S_2(y_1)$ and $0\in S_2(y_2)$, then by 4) and the definition of $P(y)$ we have
    \begin{align*}
        P(y_1)&=\frac{L-A_{y_1}}{B_{y_1}-A_{y_1}}
        \\&\ge \frac{L-A_{y_1}}{B_{y_2}-A_{y_1}},\ \text{since}\  B_{y_1}\le B_{y_2}
        \\&\ge \frac{L-A_{y_2}}{B_{y_2}-A_{y_2}},\ \text{since}\ A_{y_1}\le A_{y_2}
        \\&=P(y_2)\;.
    \end{align*}
    \item If $0\notin S_1(y_1)\cup S_2(y_1)$, then $A_{y_1}\ge L$, so $A_{y_2}\ge L$ thus $0\notin S_1(y_2)\cup S_2(y_2)$ therefore $P(y_1)=P(y_2)=0$.
\end{itemize}
Therefore in all cases if $y_1\le y_2$, then $P(y_1)\ge P(y_2)$.
\end{proof}
To apply the KKL theorem, we need a deterministic decoder for $\C$,
which we will now define.
\begin{defn}
\label{defD'}
    Let $0\le \alpha\le 1$, let $y\in\F^n$, $S_1(y)$ and $S_2(y)$ as in Definition \ref{def1}. Let's consider the decoder defined as 
    \begin{equation*}
        D_\alpha(y)=\begin{cases}
            S_1(y)& \text{if}\ \frac{W(y)}{|S_2(y)|}<\alpha
            \\S_1(y)\cup S_2(y)& \text{if}\ \frac{W(y)}{|S_2(y)|}\ge \alpha
        \end{cases}.
    \end{equation*}
\end{defn}
\begin{clm}
    \label{list-size}
    Let $0< \alpha\le 1$. If $L'=\max_{y\in\mathbb{F}_2^n} |D_\alpha(y)|$ then $L'\le\frac{1}{\alpha}L$.
\end{clm}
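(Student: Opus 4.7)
The plan is to bound $|D_\alpha(y)|$ directly by case analysis on the two branches in \Cref{defD'}, using only the arithmetic relationship between $|S_1(y)|$, $|S_2(y)|$, $W(y)$ and $L$ that comes from \Cref{def1}. The key observation I will record first is that the minimality of $d^*(y)$ forces $|S_1(y)| < L$, since otherwise $\sum_{i=0}^{d^*-1} N_i \ge L$ would contradict the choice of $d^*$. Consequently $W(y) = L - |S_1(y)| > 0$, which makes the ratio $W(y)/|S_2(y)|$ well-defined whenever $S_2(y) \ne \emptyset$ (and if $S_2(y) = \emptyset$ the second branch is vacuous so we may assume nonemptiness in Case 2).

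In Case 1, where $W(y)/|S_2(y)| < \alpha$, the decoder outputs only $S_1(y)$, so $|D_\alpha(y)| = |S_1(y)| < L \le L/\alpha$ since $\alpha \le 1$. In Case 2, where $W(y)/|S_2(y)| \ge \alpha$, the bound $|S_2(y)| \le W(y)/\alpha$ is immediate, so
\begin{align*}
|D_\alpha(y)| = |S_1(y)| + |S_2(y)| \le |S_1(y)| + \frac{W(y)}{\alpha} = |S_1(y)| + \frac{L - |S_1(y)|}{\alpha}.
\end{align*}
Using $\alpha \le 1$ to replace the coefficient of $|S_1(y)|$ by $1/\alpha$, this is at most $|S_1(y)|/\alpha + (L - |S_1(y)|)/\alpha = L/\alpha$, completing Case 2.

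Taking the maximum over $y$ yields $L' \le L/\alpha$. There is no real obstacle here; the argument is a two-line case split. The only care needed is the observation that $|S_1(y)| < L$ (so $W(y) > 0$) and that $\alpha \le 1$ lets us upper bound the coefficient of $|S_1(y)|$ in Case 2, which is precisely where the hypothesis $\alpha \le 1$ is used.
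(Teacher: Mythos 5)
Your proof is correct and follows essentially the same two-case analysis as the paper: when the first branch fires you bound by $|S_1(y)|\le L\le L/\alpha$, and when the second branch fires you use $W(y)\ge\alpha|S_2(y)|$ together with $L=W(y)+|S_1(y)|$ and $\alpha\le 1$ to get $L\ge\alpha(|S_1(y)|+|S_2(y)|)$. The only cosmetic difference is that you make the observation $|S_1(y)|<L$ explicit and rearrange the algebra slightly; the underlying argument is identical.
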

\begin{proof}
Let $y_0=\argmax_{y\in\mathbb{F}_2^n} |D_\alpha(y)|$,  if $D_\alpha(y_0)=S_1(y_0)$, then $|D_\alpha(y_0)|\le L$. In the other case, we have $W(y_0)\ge \alpha|S_2(y_0)|$, so $L=W(y_0)+|S_1(y_0)|\ge \alpha|S_2(y_0)|+|S_1(y_0)|$, then $L\ge \alpha(|S_2(y_0)|+|S_1(y_0)|)=\alpha L'$, and therefore $L'\le \frac{1}{\alpha}L$.  
\end{proof}
\begin{clm}
\label{prop18}
For every $x\in \C, \Pr[x\in D_\alpha(x+Z)]=\Pr[0\in D_\alpha(Z)],$ for $Z\underset{iid}{\sim} Ber(p)$.
\end{clm}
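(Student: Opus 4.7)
The plan is to prove the pointwise identity
\begin{align*}
\{0 \in D_\alpha(z)\} = \{x \in D_\alpha(x+z)\}
\end{align*}
for every $z \in \mathbb{F}_2^n$, from which the probability equality follows immediately by averaging over $Z$. This is in the same spirit as \Cref{clm4.5}, which handled the MAP list decoder via \Cref{clm1}; here I need to also track the quantities $|S_1|,|S_2|,W$ that enter the definition of the deterministic decoder $D_\alpha$.

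The only ingredient is linearity of $\mathcal{C}$. For $x \in \mathcal{C}$, the translation $c \mapsto c+x$ is a bijection on $\mathcal{C}$ and preserves Hamming distance: $\Delta(c,z) = \Delta(c+x, x+z)$. Hence the distance profile of $z$ with respect to $\mathcal{C}$ equals that of $x+z$, i.e.\ $N_d(z) = N_d(x+z)$ for every $d$. By the definition of $d^*$ in \Cref{def1}, this yields $d^*(z) = d^*(x+z)$.

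From the distance-preserving bijection I then get $S_1(x+z) = S_1(z) + x$ and $S_2(x+z) = S_2(z)+x$, so $|S_i(x+z)| = |S_i(z)|$ and consequently $W(x+z) = L - |S_1(x+z)| = L - |S_1(z)| = W(z)$. Therefore the ratio $W(y)/|S_2(y)|$ that decides the branch in \Cref{defD'} is the same for $y = z$ and $y = x+z$: the decoder takes the same branch in both cases, so $D_\alpha(x+z) = D_\alpha(z) + x$ as sets.

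Finally, $x \in D_\alpha(x+z)$ iff $0 \in D_\alpha(z)$, since $D_\alpha(x+z)$ is obtained from $D_\alpha(z)$ by the bijection $c \mapsto c+x$. Averaging over $Z \sim \Ber(p)^{\otimes n}$ gives the claim. There is no real obstacle here; the main point is simply to check that the deterministic decision rule in $D_\alpha$ is a function of quantities which are invariant under the coordinate shift $y \mapsto y+x$ when $x \in \mathcal{C}$.
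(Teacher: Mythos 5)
Your proof is correct and follows essentially the same route as the paper: exploit linearity to show that translation by a codeword $x$ commutes with the decoder $D_\alpha$, so the membership events are translation-equivariant, and then average over $Z$. You spell out more explicitly than the paper does why the branch test $W(y)/|S_2(y)|\ge\alpha$ is invariant under $y\mapsto x+y$ (via $|S_i(x+z)|=|S_i(z)|$ and $W(x+z)=W(z)$), a point the paper leaves implicit in its appeal to \Cref{clm1}.
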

\begin{proof}
   By Claim \ref{clm1} we have, $\forall x\in \C, \forall z\in \F^n, 0\in D_p(z)\text{ if and only if }x\in D_p(x+z)$. The claim follows
   by averaging over $z$.
\end{proof}
\begin{lem}
\label{prop19}
Let $Z\underset{iid}{\sim} Ber(p)$ for some $0\le p\le 1$
and $\varepsilon>0$. If $\Pr[0\in D_{\mathrm{\MAP}}(Z)]\ge\varepsilon$,
then $\Pr[0\in D_{\varepsilon/2}(Z)]\ge\frac{\varepsilon}{2}$.
\end{lem}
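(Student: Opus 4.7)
The plan is to reduce the claim to a one-line Markov-type inequality after identifying the right scalar to threshold. Define $P(y) := \Pr[0 \in D_{\MAP}(y)]$ as in \Cref{pro3}, i.e.~the success probability of the randomised MAP decoder on a fixed received word $y$ (taken over its internal coins). Directly from \Cref{def1},
\begin{align*}
P(y)=\begin{cases} 1 & \text{if } 0\in S_1(y),\\ W(y)/|S_2(y)| & \text{if } 0\in S_2(y),\\ 0 & \text{otherwise.}\end{cases}
\end{align*}
Conditioning on $Z$, the hypothesis rewrites as $\EE[P(Z)]\ge\varepsilon$, and the conclusion will amount to showing $\Pr[P(Z)\ge\varepsilon/2]\ge\varepsilon/2$, once the event identity below is established.

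The key observation is that for every $\alpha\in(0,1]$ and every $y\in\F^n$ one has the deterministic equivalence $\{0\in D_\alpha(y)\}=\{P(y)\ge\alpha\}$. This is a three-case check against \Cref{defD'}: if $0\in S_1(y)$, then $P(y)=1\ge\alpha$ and $0\in S_1(y)\subseteq D_\alpha(y)$; if $0\in S_2(y)$, both sides reduce to $W(y)/|S_2(y)|\ge\alpha$; and if $0\notin S_1(y)\cup S_2(y)$, then $P(y)=0<\alpha$ and $0\notin D_\alpha(y)$. Specialising to $\alpha=\varepsilon/2$ yields $\Pr_Z[0\in D_{\varepsilon/2}(Z)]=\Pr_Z[P(Z)\ge\varepsilon/2]$.

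To conclude, I would use $P(Z)\le 1$ pointwise to split the expectation at the threshold $\varepsilon/2$:
\begin{align*}
\varepsilon\;\le\;\EE[P(Z)]\;\le\;1\cdot\Pr[P(Z)\ge\varepsilon/2]+(\varepsilon/2)\cdot\Pr[P(Z)<\varepsilon/2]\;\le\;\Pr[P(Z)\ge\varepsilon/2]+\varepsilon/2,
\end{align*}
which rearranges to $\Pr[P(Z)\ge\varepsilon/2]\ge\varepsilon/2$, exactly the desired statement by the event identity above. There is no real technical obstacle: the argument is essentially a discrete Markov inequality applied to $P(Z)$, and the only step requiring a moment of thought is the equivalence $\{0\in D_\alpha(y)\}=\{P(y)\ge\alpha\}$, which is precisely what makes \Cref{defD'} the correct derandomisation of $D_{\MAP}$ for this purpose.
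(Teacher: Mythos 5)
Your proof is correct and takes essentially the same approach as the paper's: both arguments bound $\Pr[0\in D_{\varepsilon/2}(Z)]$ below by $\EE[P(Z)]-\varepsilon/2$, using the fact that $P(y)\in[0,1]$ to absorb the contribution of words with $P(y)<\varepsilon/2$. The only difference is presentational: the paper writes out the $S_1$/$S_2$ decomposition and three indicator-weighted expectations explicitly, whereas you first isolate the clean deterministic identity $\{0\in D_\alpha(y)\}=\{P(y)\ge\alpha\}$, after which the bound is a one-line Markov-type inequality on the scalar random variable $P(Z)$ --- arguably the more transparent packaging of the same computation.
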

\begin{proof}
    By the definition of the MAP decoder:
   \begin{equation*}
       \varepsilon\le\Pr[0\in D_{\MAP}(Z)]=\Pr[0\in S_1(Z)]+\mathbbm{E}\left[\frac{W(Z)}{|S_2(Z)|}\mathbbm{1}[0\in S_2(Z)]\right]\;.
   \end{equation*} 

On the other hand for $Z\underset{iid}{\sim} \Ber(p)$ and $\alpha=\varepsilon/2$ we have 
\begin{align*}
    \Pr[0\in D_\alpha(Z)]=&\Pr[0\in S_1(Z)]+\Pr\left[0\in S_2(Z)\ \text{and}\ \frac{W(Z)}{|S_2(Z)|}\ge \alpha\right]
    \\\ge&
    \Pr[0\in S_1(Z)]+ \mathbbm{E}\left[\frac{W(Z)}{|S_2(Z)|}\mathbbm{1}[0\in S_2(Z)]\mathbbm{1} \left[\frac{W(Z)}{|S_2(Z)|}\ge\alpha\right]\right]
    \\=&  
    \Pr[0\in D_{\MAP}(Z)]-\mathbbm{E}\left[\frac{W(Z)}{|S_2(Z)|}\mathbbm{1}[0\in S_2(Z)]\mathbbm{1} \left[\frac{W(Z)}{|S_2(Z)|}<\alpha\right]\right]
    \\\ge&\varepsilon-\alpha=\varepsilon/2\;.\qedhere
\end{align*}
\end{proof}
From now on 
let's fix some $\alpha>0$, denote $D'=D_{\alpha}$ 
and consider the function defined on $\F^n$ by $f(y)=\mathbbm{1}[0\in D'(y)]$.
\begin{lem}
    The function $f$ is decreasing on $\F^n$.
    \label{pro20}
\end{lem}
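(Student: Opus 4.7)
The plan is to observe that $f$ is exactly the indicator of an $\alpha$-super-level set of the function $P$ studied in \Cref{pro3}, and then invoke the monotonicity of $P$ directly. Concretely, for $\alpha \in (0,1]$ I would verify the pointwise identity $f(y) = \mathbbm{1}[P(y) \ge \alpha]$ by a short case analysis on whether $0$ lies in $S_1(y)$, in $S_2(y)$, or in neither. Using the piecewise formula for $P$ recorded in the proof of \Cref{pro3} together with \Cref{defD'}, these cases give $P(y) = 1$, $P(y) = W(y)/|S_2(y)|$, and $P(y) = 0$ respectively, and $\alpha \in (0,1]$ makes the identification immediate in each one.

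Once this identity is in place, monotonicity of $f$ is a one-line consequence of \Cref{pro3}: for $y_1 \le y_2$ one has $P(y_1) \ge P(y_2)$, so $P(y_2) \ge \alpha$ forces $P(y_1) \ge \alpha$, hence $f(y_1) \ge f(y_2)$. The degenerate case $\alpha = 0$ needs separate handling since the identity above degenerates; but directly from \Cref{defD'} one sees $D_0(y) = S_1(y) \cup S_2(y)$, so $f(y) = \mathbbm{1}[A_y < L]$, which is decreasing by property (c) of the proof of \Cref{pro3} together with the monotonicity $A_{y_1} \le A_{y_2}$ established there.

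I do not expect any real obstacle: the entire combinatorial engine — monotonicity of $A_y$, $B_y$, and of the ratio $(L - A_y)/(B_y - A_y)$ — has already been built in the preceding lemma. The only step that requires any thought is recognizing that the deterministic threshold decoder $D_\alpha$ is precisely a super-level set of the randomized MAP probability $P$, after which \Cref{pro20} follows without further work.
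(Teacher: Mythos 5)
Your proposal is correct and is essentially the paper's proof: the paper also notes $f(y)=1$ if and only if $P(y)\ge\alpha$ and then invokes the monotonicity of $P$ from \Cref{pro3}. The separate handling of $\alpha=0$ you mention is unnecessary, since the paper fixes $\alpha>0$ just before the lemma, but it is harmless.
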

\begin{proof}
    Let $y_1, y_2\in\F^n$ such that $y_1\le y_2$. Note that $f(y)=1$ if and only if
    $P(y)\ge\alpha$. But, by \Cref{pro3} it holds $P(y_1)\ge P(y_2)$.
    The claim follows.
\end{proof}
\begin{lem}
    If code $\mathcal{C}$ is transitive, then 
   function $f$ is likewise transitive symmetric.
   \label{pro21}
\end{lem}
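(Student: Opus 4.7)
The plan is to show the stronger statement that the symmetry group of the code is contained in the symmetry group of $f$, i.e., $\mathcal{G}(\mathcal{C})\subseteq \mathcal{G}(f)$; transitivity of $f$ then follows immediately from transitivity of $\mathcal{C}$.

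First I would observe that the decoder $D_\alpha$ is equivariant under $\mathcal{G}(\mathcal{C})$ in the sense that $D_\alpha(\pi(y))=\pi(D_\alpha(y))$ for every $\pi\in\mathcal{G}(\mathcal{C})$ and every $y\in\mathbb{F}_2^n$. This should be a routine unwinding of \Cref{def1} and \Cref{defD'}: since coordinate permutations preserve Hamming distance and $\pi(\mathcal{C})=\mathcal{C}$, we get $N_d(\pi(y))=|\{x\in\mathcal{C}:\Delta(x,\pi(y))=d\}|=|\{\pi(x'):\Delta(x',y)=d\}|=N_d(y)$, whence $d^*(\pi(y))=d^*(y)$, $S_i(\pi(y))=\pi(S_i(y))$ for $i=1,2$, and $W(\pi(y))=W(y)$. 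Because $|S_2(\pi(y))|=|S_2(y)|$ as well, the threshold test $W/|S_2|\gtrless\alpha$ selects the same branch of the decoder on $y$ and on $\pi(y)$, so the equivariance identity holds.

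Next, since the code is linear, $0\in\mathcal{C}$, and a coordinate permutation fixes the all-zero vector, $\pi(0)=0$. Combining this with the equivariance, $0\in D_\alpha(\pi(y))$ is equivalent to $0=\pi(0)\in\pi(D_\alpha(y))$, which is in turn equivalent to $0\in D_\alpha(y)$. Therefore $f(\pi(y))=\mathbbm{1}[0\in D_\alpha(\pi(y))]=\mathbbm{1}[0\in D_\alpha(y)]=f(y)$, which yields $\pi\in\mathcal{G}(f)$.

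Since $\mathcal{C}$ is transitive by assumption, $\mathcal{G}(\mathcal{C})$ acts transitively on the coordinate set $\{1,\dots,n\}$, and from the inclusion $\mathcal{G}(\mathcal{C})\subseteq\mathcal{G}(f)$ just established, $\mathcal{G}(f)$ is also transitive, so $f$ is transitive symmetric. I do not expect any real obstacle here; the only subtlety is being careful with the convention for how $\pi$ acts on $\mathbb{F}_2^n$ versus on subsets of $\mathcal{C}$, but this is bookkeeping rather than a genuine difficulty.
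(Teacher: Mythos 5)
Your proof is correct and follows essentially the same route as the paper's: both establish the equivariance $D_\alpha(\pi(y))=\pi(D_\alpha(y))$ for $\pi\in\mathcal{G}(\mathcal{C})$ by showing $S_i(\pi(y))=\pi(S_i(y))$ (using that permutations preserve Hamming distance and stabilize $\mathcal{C}$), and then use $\pi(0)=0$ to conclude $f\circ\pi=f$. Your framing via the inclusion $\mathcal{G}(\mathcal{C})\subseteq\mathcal{G}(f)$ and your explicit check that the threshold test $W/|S_2|\gtrless\alpha$ selects the same branch are minor expository refinements, not a different argument.
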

\begin{proof}
    Let $i,j\in[n], i\neq j$, since $\C$ is transitive there exists $\pi\in \mathcal{S}_n$ such that $\pi(\C)=\C$ and $\pi(i)=j$. We want to show that $f(\pi(y))=f(y)$ for every $y$ in $\F^n$. Let $y\in\F^n$, first let's prove that $S_1(\pi(y))=\pi(S_1(y))$ and $S_2(\pi(y))=\pi(S_2(y))$. Let $x\in \C$,
    \begin{align*}
        x\in S_1(\pi(y)) &\iff |\{c\in\C, \Delta(c,\pi(y))\le\Delta(x,\pi(y))\}|<L
        \\&\iff |\{c\in\C, \Delta(\pi^{-1}(c),y)\le\Delta(\pi^{-1}(x),y)\}|<L
        \\&\iff \pi^{-1}(x)\in S_1(y)
        \\&\iff x\in \pi(S_1(y)).
    \end{align*}
    Then we have $S_1(\pi(y))=\pi(S_1(y))$. Similarly we can show that $S_2(\pi(y))=\pi(S_2(y))$. These two facts show that $D'(\pi(y))=\pi(D'(y))$, therefore we have
    \begin{align*}
        f(\pi(y))&=\mathbbm{1}[0\in D'(\pi(y))]
        \\&=\mathbbm{1}[0\in \pi(D'(y))]
        \\&=\mathbbm{1}[\pi^{-1}(0)\in D'(y)]
        \\&=\mathbbm{1}[0\in D'(y)]
        \\&=f(y).\qedhere
    \end{align*}
\end{proof}

We now restate a classical theorem on sharp thresholds which is
basically an application of the KKL theorem \cite{kahn1989influence}.
\begin{thm}[Theorem 19 in \cite{kudekar2016reedmuller}]
\label{thm19}
There exists a universal constant $C>0$ such that for any
nonconstant
increasing transitive-symmetric Boolean function $g:\mathbb{F}_2^n\to\{0,1\}$
and $0<\varepsilon\le\frac{1}{2}$
\begin{equation*}
    p_{1-\varepsilon}-p_{\varepsilon}\le \frac{C\log(\frac{1-\varepsilon}{\varepsilon})}{\log M},
\end{equation*}
where $p_t=\inf\{p\in[0,1], \Pr[g(Z)=1]\ge t\text{ with }Z \text{ i.i.d.~from }  \Ber(p)\}$.
\end{thm}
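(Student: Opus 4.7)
The plan is to derive this sharp threshold from the Russo--Margulis identity combined with a KKL-type lower bound on total influence for transitive-symmetric monotone Boolean functions, then integrate the resulting differential inequality between the two threshold points. Here $M$ should be read as the size of a transitivity orbit of the symmetry group of $g$ (in the plain transitive case $M=n$); this is the parameter against which the KKL bound scales.

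Set $\mu(p):=\Pr_p[g(Z)=1]$ for $Z$ i.i.d.\ $\Ber(p)$. Since $g$ is increasing, the Russo--Margulis identity gives
$$\frac{d\mu}{dp}(p) \;=\; \sum_{i=1}^n I_i^{(p)}(g), \qquad I_i^{(p)}(g)\;:=\;\Pr_p\!\big[g(Z^{i\to 0})\neq g(Z^{i\to 1})\big].$$
Transitivity forces all $I_i^{(p)}(g)$ to coincide. The next ingredient is the $p$-biased KKL theorem (Bourgain--Kahn--Kalai--Katznelson--Linial, sharpened for transitive functions by Friedgut--Kalai), which yields
$$\sum_i I_i^{(p)}(g)\;\geq\; c\cdot\mu(p)(1-\mu(p))\cdot\log M$$
for an absolute constant $c>0$. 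Factors of $1/(p(1-p))$ that appear naturally in the $p$-biased version of KKL are bounded on any sub-interval of $(0,1)$ away from the endpoints, and this is harmless: the integration below runs only over $p$-values where $\mu\in[\varepsilon,1-\varepsilon]$, so $p$ stays in a fixed compact sub-interval and the constant $c$ can absorb the resulting factor.

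Combining the two ingredients produces the differential inequality $\mu'(p)\geq c\,\mu(p)(1-\mu(p))\log M$. Separating variables and integrating between $p_\varepsilon$ and $p_{1-\varepsilon}$,
$$\int_{\varepsilon}^{1-\varepsilon}\frac{d\mu}{\mu(1-\mu)}\;\leq\; c\,(p_{1-\varepsilon}-p_\varepsilon)\log M.$$
Partial fractions give $\int\frac{d\mu}{\mu(1-\mu)}=\log\frac{\mu}{1-\mu}$, so the left-hand side equals $2\log\frac{1-\varepsilon}{\varepsilon}$, and the claim $p_{1-\varepsilon}-p_\varepsilon\leq C\log((1-\varepsilon)/\varepsilon)/\log M$ follows after absorbing the factor of $2$ into $C$.

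The main obstacle is the $p$-biased KKL step: classical KKL is stated for the uniform measure on $\{0,1\}^n$, and upgrading it to $\mu_p$ while preserving the sharp $\log M$ factor is the nontrivial ingredient. Either one appeals directly to the $p$-biased KKL/Friedgut--Kalai bound, or one uses the symmetrization trick that embeds $(\{0,1\}^n,\mu_p)$ into $(\{0,1\}^N,\mu_{1/2})$ for some $N\gg n$ in a way that preserves the transitive symmetry, applies the uniform-measure KKL to the symmetrized function, and translates back. Obtaining the dependence on $\log M$ (rather than, say, $\log n$ in a non-purely-transitive setting) through this embedding is where one must be most careful.
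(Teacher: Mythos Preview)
The paper does not prove this theorem; it is quoted as Theorem~19 of \cite{kudekar2016reedmuller} and invoked as a black box, with only the remark that it is ``basically an application of the KKL theorem.'' Your outline (Russo--Margulis plus a KKL-type influence bound for transitive monotone functions, then separate variables and integrate) is exactly the Friedgut--Kalai argument behind the cited result, and your reading of $M$ as $n$ is the intended one.

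One remark on the $p$-biased step. Your proposed fix --- that the integration runs over a compact sub-interval of $(0,1)$, so any $1/(p(1-p))$ factor is bounded --- does not yield a \emph{universal} $C$, because the sub-interval $[p_\varepsilon,p_{1-\varepsilon}]$ depends on $g$. In fact no such fix is needed: the BKKKL theorem for general product probability spaces already gives $\max_i I_i^{(p)}(g)\ge c\,\mathrm{Var}_p(g)\,(\log n)/n$ with an absolute constant $c$ independent of $p$; combined with transitivity this yields $\mu'(p)\ge c\,\mu(p)(1-\mu(p))\log n$ for all $p\in(0,1)$, and the integration goes through directly.
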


\begin{proof}[Proof of Theorem \ref{Thm1}]
Let's suppose $\Pr[X\notin D(Y_p)]\le 1-\varepsilon+o(1)$, so we have $\liminf_{n\to\infty} \Pr[X\in D(Y_p)]\ge\varepsilon$. Then for $n$ large enough $\Pr[X\in D(Y_p)]\ge\frac{\varepsilon}{2}$, then $\Pr[X\in D_{\MAP}(Y_p)]\ge \frac{\varepsilon}{2}$, since $D_{\MAP}$ is the maximum likelihood decoder. By Claim \ref{clm4.5} we have $\Pr[0\in D_{\MAP}(Z_p)]=\Pr[X\in D_{\MAP}(Y_p)]\ge\frac{\varepsilon}{2}$. Let $D':=D_{\frac{\varepsilon}{4}}:\mathbb{F}_2^n\to\C_n^{L'}$, by Claim \ref{list-size} $L'\le\frac{4}{\varepsilon}L$, so applying Lemma \ref{prop19}, we get 
\begin{equation}
\label{eq3}
    \Pr[0\in D'(Z_p)]\ge\frac{\varepsilon}{4}.
\end{equation}
Let $g(y)=\mathbbm{1}[0\notin D'(y)]$, by 
\Cref{pro20} and \Cref{pro21}, function $g$ is increasing and transitive-symmetric. If $g(1^n)=0$, then $\Pr[0\notin D'(Z_p)]=0$.
Otherwise, $g$ is not constant and we let
$p_t$ to be the unique $p\in[0,1]$ such that $\mathbb{E}[g(Z_{p})]=t$. 

Let $0< t<\frac{\varepsilon}{4}$, by Equation \ref{eq3} we have $p\le p_{1-\frac{\varepsilon}{4}}\le p_{1-t}$. Applying \Cref{thm19} to $g$, we get $p_{t}\ge p_{1-t}-\frac{C\log(\frac{1-t}{t})}{\log n }\ge p-\frac{C\log(\frac{1-t}{t})}{\log n }$, and for $n$ large enough $p_{t}\ge p-\delta$, then $\mathbb{E}[g(Z_{p-\delta})]\le t$. Since this holds
for every $t>0$, it follows
\begin{equation*}
    \lim_{n\to\infty} \Pr[0\notin D'(Z_{p-\delta})]=
    \lim_{n\to\infty} \mathbb{E}[g(Z_{p-\delta})]=0.
\end{equation*}
Also, by Claim \ref{prop18} $\Pr[0\notin D'(Z_{p-\delta})]=\sup_{x\in\C_n} \Pr[X\notin D'(Y_{p-\delta})|X=x]$, therefore 
\begin{equation*}
    \lim_{n\to\infty} \sup_{x\in\C_n} \Pr[X\notin D'(Y_{p-\delta})|X=x]=0.
    \qedhere
\end{equation*}
\end{proof}

\section{Hidden Markov processes}
In this section we prove and discuss a new lower bound on the entropy rate of binary hidden Markov processes. In \cite{ordentlich2016novel}, a lower bound has been provided using Samorodnitsky's inequality, here we use the more capable with advantage relation to obtain better bounds as illustrated in Figure \ref{fig1}.
\Cref{pro5.3} follows easily from the following two lemmas:

\label{sec:hidden-markov}
\begin{lem}
\label{pro5.1}
    Let $X\in\mathbb{F}_2^n$ be a random variable. Let $Y,E$ be the results of transmitting $X$ over
$\mathrm{BSC}_\alpha$ and $\mathrm{BEC}_\gamma$, respectively. Let $\eta(\alpha,\gamma):=\etamc(\BSC_\alpha,\BEC_\gamma)$. Then, $$\frac{H(Y)}{n}\ge\frac{I(X;E)}{n}+h(\alpha)-\eta(\alpha,\gamma).$$
\end{lem}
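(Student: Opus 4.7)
The plan is to combine tensorization of the more-capable-with-advantage relation (\Cref{mc-tensorize}) with a routine entropy decomposition of $H(Y)$. The inequality we want is essentially a one-line rearrangement once both ingredients are in place.

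First, I would apply \Cref{mc-tensorize} to the hypothesis $\BSC_\alpha + \eta(\alpha,\gamma) \mc \BEC_\gamma$, concluding that the product channels satisfy $\BSC_\alpha^n + n\,\eta(\alpha,\gamma) \mc \BEC_\gamma^n$. Since $Y$ is the output of transmitting $X$ through $\BSC_\alpha^n$ and $E$ is the output of transmitting the same $X$ through $\BEC_\gamma^n$, the definition of more capable with advantage yields
\begin{equation*}
I(X;Y) + n\,\eta(\alpha,\gamma) \;\ge\; I(X;E).
\end{equation*}

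Second, I would decompose $H(Y) = I(X;Y) + H(Y \mid X)$. Writing $Y = X \oplus Z$ where $Z \in \mathbb{F}_2^n$ is i.i.d.\ $\Ber(\alpha)$ and independent of $X$, we get $H(Y \mid X) = H(Z) = n\,h(\alpha)$. Combining this with the inequality above and dividing by $n$ gives exactly
\begin{equation*}
\frac{H(Y)}{n} \;=\; \frac{I(X;Y)}{n} + h(\alpha) \;\ge\; \frac{I(X;E)}{n} - \eta(\alpha,\gamma) + h(\alpha),
\end{equation*}
which is the claim.

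There is essentially no obstacle here: everything is structural, and the only nontrivial ingredient invoked is the tensorization property of $\mc$ with advantage, which was already established. I would just be careful to note explicitly that $Z$ is independent of $X$ so that the step $H(Y\mid X) = H(Z)$ is justified, and that the tensorized advantage scales as $n\,\eta(\alpha,\gamma)$ rather than $\eta(\alpha,\gamma)$ alone.
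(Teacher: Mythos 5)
Your proof is correct and follows essentially the same route as the paper: tensorize the more-capable-with-advantage relation via \Cref{mc-tensorize}, then use the decomposition $H(Y)=I(X;Y)+H(Y\mid X)$ with $H(Y\mid X)=n\,h(\alpha)$. The paper's version is just a more compressed one-liner of the same argument.
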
 
\begin{proof}
    This follows by Definition \ref{def2.1}, Proposition \ref{mc-tensorize} and the chain rule of the mutual information and entropy. Indeed, we have $\frac{H(Y)}{n} = \frac{I(X;Y)+H(Y|X)}{n}\ge \frac{I(X;E)-\eta(\alpha,\gamma)n}{n}+h(\alpha)=\frac{I(X;E)}{n}+h(\alpha)-\eta(\alpha,\gamma)$.
\end{proof}
Now considering the following Markov process: $X_1$ is uniform in $\mathbb{F}_2$, $X_{t+1}=X_t\oplus W_{t+1}$, where $W_t\sim\mathrm{Ber}(q)$ are iid. $Y_t=X_t\oplus Z_t$, where $Z_t\sim\mathrm{Ber}(\alpha)$ are also iid. In~\cite{ordentlich2016novel} it was shown:

\begin{lem}[Proposition 2 in \cite{ordentlich2016novel}]
\label{pro5.2}
    Let $0<\gamma<1$.
    For $X\xrightarrow{\mathrm{BEC}_\gamma}E$ and $G\sim\mathrm{Geom}(1-\gamma)$ (that is, $\Pr[G=g]=\gamma^{g-1}(1-\gamma)$ for $g=1,2,3,\ldots$):
$$ \lim_{n\to\infty} \frac{I(X;E)}{(1-\gamma)n} = \mathbb{E} h(q^{*G})$$
where $q^{*k}=q*q\ldots*q$ for $k$ times, in particular $q^{*k}=\frac{1-(1-2q)^k}{2}$.
\end{lem}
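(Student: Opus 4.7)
My plan is to reduce $I(X;E)$ to an expectation over geometrically-spaced gaps between unerased positions, exploiting the Markov structure of $X_1,\ldots,X_n$.

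First, I would observe that the BEC output $E$ is in deterministic bijection with the pair $(S,X_S)$, where $S\subseteq[n]$ is the random set of unerased positions and $X_S=(X_i)_{i\in S}$. Since $S$ is independent of $X$ while $X_S$ is a deterministic function of $(X,S)$,
\begin{align*}
I(X;E) = I(X;S,X_S) = H(S,X_S)-H(S,X_S\mid X) = H(X_S\mid S).
\end{align*}

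Next, for a realization $S=\{s_1<\cdots<s_m\}$, the chain rule together with the Markov property of the subsampled chain $(X_{s_1},\ldots,X_{s_m})$ gives $H(X_S\mid S=\{s_1,\ldots,s_m\}) = H(X_{s_1}) + \sum_{i=1}^{m-1} H(X_{s_{i+1}}\mid X_{s_i})$. Each $X_t$ is marginally uniform (by induction on $t$), so $H(X_{s_1})=1$; and since $X_{s_{i+1}}\oplus X_{s_i}$ is the XOR of $s_{i+1}-s_i$ iid $\mathrm{Ber}(q)$ increments, it is $\mathrm{Ber}(q^{*(s_{i+1}-s_i)})$, yielding $H(X_{s_{i+1}}\mid X_{s_i})=h(q^{*(s_{i+1}-s_i)})$.

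Taking expectation over $S$ by linearity, every pair $1\le i<j\le n$ contributes $h(q^{*(j-i)})$ weighted by $\Pr[i,j\in S,\ \{i{+}1,\ldots,j{-}1\}\cap S=\emptyset]=(1-\gamma)^2\gamma^{j-i-1}$. Grouping by the gap length $k=j-i$ and adding the boundary contribution of $H(X_{s_1})$,
\begin{align*}
I(X;E) = \Pr[|S|\ge 1] + \sum_{k=1}^{n-1}(n-k)(1-\gamma)^2\gamma^{k-1}\,h(q^{*k}).
\end{align*}
Dividing by $n(1-\gamma)$, the first term is $O(1/n)$; in the sum, $(n-k)/n\to 1$ for each fixed $k$, so dominated convergence (with envelope $h\le 1$ against the probability mass $(1-\gamma)\gamma^{k-1}$ on $k\ge 1$) delivers the limit $\sum_{k\ge 1}(1-\gamma)\gamma^{k-1}h(q^{*k})=\mathbb{E}\,h(q^{*G})$ with $G\sim\mathrm{Geom}(1-\gamma)$.

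No step is genuinely hard: the essential insight is the reduction $I(X;E)=H(X_S\mid S)$ together with the fact that gaps between consecutive unerased positions are geometric with parameter $1-\gamma$, so that an average unerased position carries $\mathbb{E}\,h(q^{*G})$ bits of new information. The only mild bookkeeping is handling boundary effects (the first unerased position contributes $1$, the last gap is truncated) and verifying dominated convergence, both of which are routine.
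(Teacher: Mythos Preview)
Your argument is correct in all essentials: the reduction $I(X;E)=H(X_S\mid S)$, the chain-rule decomposition along the subsampled Markov chain, the identification of the gap distribution as geometric, and the limit via dominated convergence are all valid as written. Note, however, that the paper does not supply its own proof of this lemma --- it is quoted verbatim as Proposition~2 of \cite{ordentlich2016novel} and used as a black box --- so there is no ``paper's own proof'' to compare against; your derivation is essentially the standard one and presumably close to what appears in the cited reference.
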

\Cref{pro5.3} follows quickly from
\Cref{pro5.1} and \Cref{pro5.2}.

A reader might have noticed that the bound in \Cref{thm:ord} is
\emph{not} identical to the bound from \Cref{pro5.3}
evaluated at $\gamma=1-(1-2\alpha)^2$.
This is due to the fact that for this special value of
$\gamma$ a stronger ``Mrs Gerber style'' bound is available
in~\cite{samorodnitsky2016entropy} (compare Theorem~1.6
and Corollary~1.9 therein). We did not obtain such
``Mrs Gerber'' bound for general $\gamma$. However, it appears
that the flexibility in the choice of $\gamma$ in
\Cref{pro5.3} overcomes this disadvantage. While we do not have a proof,
the cases that we empirically checked are consistent with
a hypothesis that \Cref{pro5.3} gives stronger bounds
than \Cref{thm:ord}.

In particular, in \Cref{fig3} we plot
the values of $\gamma$ which achieve the maximum
on the right-hand side in \Cref{pro5.3}.
In the two cases studied it appears that the optimal
value of $\gamma$ is generally larger than
$\gamma=4\alpha(1-\alpha)$ employed in
\Cref{thm:ord}.

\begin{figure}[H]
\centering
\begin{subfigure}[t]{.45\textwidth}
  \centering
  \includegraphics[width=\linewidth]{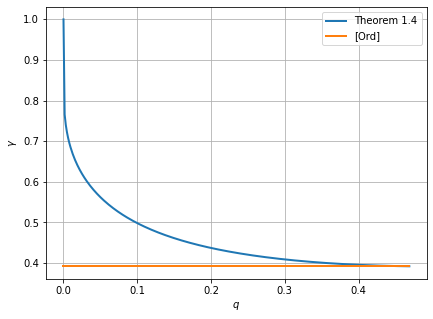}
  \caption{$\alpha=0.11$}
  \label{fig:sub11}
\end{subfigure}
\begin{subfigure}[t]{0.45\textwidth}
  \centering
  \includegraphics[width=\linewidth]{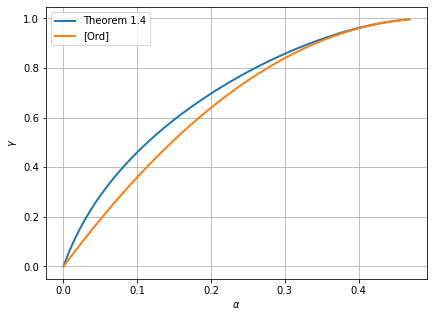}
  \caption{$q=0.11$}
  \label{fig:sub22}
\end{subfigure}
\caption{Comparison of $\gamma$ from \cite{ordentlich2016novel} \big(i.e. $\gamma=4\alpha(1-\alpha)$\big) and that of \Cref{pro5.3} which is the argmax of the right-hand side of \Cref{eq1.4}.}
\label{fig3}
\end{figure}
\appendix

\paragraph{Acknowledgments}
This work was supported by the Alexander von Humboldt Foundation German research chair
funding and the associated DAAD project No. 57610033.

\section{Appendix}

\subsection{Proof of \Cref{mc-tensorize}}

\begin{lem} Let $X^n, Y^n$ and $Z^n$ be three random vectors such that $P_{Y^n|X^n}=W_1\times\ldots\times W_n$ and 
$P_{Z^n|X^n}=\Wtilde_1\times\ldots\times\Wtilde_n$, then
    \begin{equation*}
        I(X^n;Y^n)-I(X^n;Z^n)=\big[I(X_n;Y_n|Y^{n-1})- I(X_n;Z_n|Y^{n-1})\big]+\big[I(X^{n-1};Y^{n-1}|Z_n)- I(X^{n-1};Z^{n-1}|Z_n)\big].
    \end{equation*}
\label{lemappx1}
\end{lem}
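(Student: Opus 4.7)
The strategy is to prove the identity by introducing the hybrid ``mixed'' term $I(X^n; Y^{n-1}, Z_n)$ and splitting the left-hand side as a telescoping difference. I would first write
\begin{equation*}
I(X^n;Y^n) - I(X^n;Z^n) = \bigl[I(X^n;Y^n) - I(X^n;Y^{n-1},Z_n)\bigr] + \bigl[I(X^n;Y^{n-1},Z_n) - I(X^n;Z^n)\bigr],
\end{equation*}
and then handle the two brackets separately using the chain rule for mutual information together with conditional-independence facts from the product structure of the channels. Without loss of generality I can take $Y^n$ and $Z^n$ conditionally independent given $X^n$, since $I(X^n;Y^n)$ and $I(X^n;Z^n)$ depend only on the marginals.

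For the first bracket I would condition on $Y^{n-1}$: the chain rule gives $I(X^n;Y^n) = I(X^n;Y^{n-1}) + I(X^n;Y_n\mid Y^{n-1})$ and $I(X^n;Y^{n-1},Z_n) = I(X^n;Y^{n-1}) + I(X^n;Z_n\mid Y^{n-1})$, so the common $I(X^n;Y^{n-1})$ term cancels and what remains is $I(X^n;Y_n\mid Y^{n-1}) - I(X^n;Z_n\mid Y^{n-1})$. Here the product-channel structure lets me replace $X^n$ by $X_n$: since $Y_n$ (resp. $Z_n$) depends only on $X_n$ and is conditionally independent of $(X^{n-1}, Y^{n-1})$ given $X_n$, one has $H(Y_n\mid X^n,Y^{n-1}) = H(Y_n\mid X_n) = H(Y_n\mid X_n,Y^{n-1})$ and likewise for $Z_n$, yielding exactly $I(X_n;Y_n\mid Y^{n-1}) - I(X_n;Z_n\mid Y^{n-1})$.

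For the second bracket I would instead condition on $Z_n$ first: the chain rule gives $I(X^n;Y^{n-1},Z_n) = I(X^n;Z_n) + I(X^n;Y^{n-1}\mid Z_n)$ and $I(X^n;Z^n) = I(X^n;Z_n) + I(X^n;Z^{n-1}\mid Z_n)$, so the common $I(X^n;Z_n)$ cancels and I am left with $I(X^n;Y^{n-1}\mid Z_n) - I(X^n;Z^{n-1}\mid Z_n)$. To drop the $X_n$ coordinate, note that conditional on $X^{n-1}$ both $Y^{n-1}$ and $Z^{n-1}$ are determined by independent channel noise and are therefore independent of $(X_n, Z_n)$; hence $I(X_n;Y^{n-1}\mid Z_n,X^{n-1}) = I(X_n;Z^{n-1}\mid Z_n,X^{n-1}) = 0$, which reduces the two mutual informations to $I(X^{n-1};Y^{n-1}\mid Z_n)$ and $I(X^{n-1};Z^{n-1}\mid Z_n)$ respectively. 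Combining the two brackets gives the claimed identity. The manipulation is purely algebraic, and the only thing that requires care is the bookkeeping of conditional independencies, but all of them follow immediately from the product structure of $W_1\times\cdots\times W_n$ and $\widetilde{W}_1\times\cdots\times\widetilde{W}_n$; I do not anticipate any real obstacle.
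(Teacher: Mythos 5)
Your proof is correct, and it is a genuinely different (and arguably cleaner) organization of the same underlying computation. The paper proves the identity by expanding both the left-hand side $A=I(X^n;Y^n)-I(X^n;Z^n)$ and the right-hand side $B$ into sums of conditional entropies, cancelling common terms, and then verifying that the residual equality $H(X_n\mid Z^n,X^{n-1})+H(X^{n-1}\mid Z_n,Y^{n-1})=H(X_n\mid Z_n,Y^{n-1})+H(X^{n-1}\mid Y^n,X_n)$ holds by showing both sides equal $H(X^n\mid Z_n,Y^{n-1})$, using the two conditional-independence facts $X_n\perp (Y^{n-1},Z^{n-1})\mid X^{n-1}$ and $X^{n-1}\perp (Y_n,Z_n)\mid X_n$. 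You instead interpolate through the hybrid term $I(X^n;Y^{n-1},Z_n)$, which turns the identity into two telescoping differences, and each of these reduces to the corresponding bracket on the right-hand side by a single chain-rule step followed by a coordinate-dropping step ($X^n\to X_n$ in the first, $X^n\to X^{n-1}$ in the second). This makes the structure of the decomposition transparent: the hybrid plays the role of ``past through $Y$, present through $Z$,'' and each bracket of the claim is seen to be one leg of the swap $Y\to Z$. One point you should make explicit: the individual terms on the right-hand side (e.g., $I(X_n;Z_n\mid Y^{n-1})$ and $I(X^{n-1};Y^{n-1}\mid Z_n)$) depend on the joint coupling of $Y^n$ and $Z^n$ given $X^n$, not only on the two marginals, so ``WLOG conditionally independent'' is not purely a matter of the marginals; it is, however, the standard broadcast-channel convention that the paper also uses implicitly (its proof relies on exactly the same joint conditional-independence facts), so this is a harmless clarification rather than a gap.
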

\begin{proof}
    We have
    \begin{align*}
        A:&=I(X^n;Y^n)-I(X^n;Z^n)\\&=H(X^n)-H(X^n|Y^n)-H(X^n)+H(X^n|Z^n)
        \\&=H(X^n|Z^n)-H(X^n|Y^n)\\&=H(X_n,X^{n-1}|Z^n)-H(X_n,X^{n-1}|Y^n)
        \\&=H(X^{n-1}|Z^n)+H(X_n|Z^n,X^{n-1})-H(X_n|Y^n)-H(X^{n-1}|Y^n,X_n)
        \\&=\underbrace{H(X^{n-1}|Z^n)-H(X_n|Y^n)}_{C}+H(X_n|Z^n,X^{n-1})-H(X^{n-1}|Y^n,X_n)
    \end{align*}
    \begin{align*}
        B:&=\big[I(X_n;Y_n|Y^{n-1})- I(X_n;Z_n|Y^{n-1})\big]+\big[I(X^{n-1};Y^{n-1}|Z_n)- I(X^{n-1};Z^{n-1}|Z_n)\big]
        \\&=\big[H(X_n|Y^{n-1})-H(X_n|Y^n)-H(X_n|Y^{n-1})+H(X_n|Z_n,Y^{n-1})\big]\\ &\qquad+\big[H(X^{n-1}|Z_{n})-H(X^{n-1}|Z_n,Y^{n-1})-H(X^{n-1}|Z_{n})+H(X^{n-1}|Z^n)\big]
        \\&=\underbrace{H(X^{n-1}|Z^n)-H(X_n|Y^n)}_{C}+H(X_n|Z_n,Y^{n-1})-H(X^{n-1}|Z_n,Y^{n-1}).
    \end{align*}
    Therefore, to show $A=B$ it remains to prove
    \begin{align*}
       H(X_n|Z^n,X^{n-1})+H(X^{n-1}|Z_n,Y^{n-1})=H(X_n|Z_n,Y^{n-1})+H(X^{n-1}|Y^n,X_n).
    \end{align*}
The last equality is always true, indeed we have
\begin{align}
    H(X_n|Z^n,X^{n-1})+H(X^{n-1}|Z_n,Y^{n-1})&= H(X_n|Z_n,X^{n-1},Y^{n-1})+H(X^{n-1}|Z_n,Y^{n-1})\label{a}\\&=H(X^n|Z_n,Y^{n-1}),\nonumber
\end{align}
Equation \eqref{a} holds since $X_{n}$ is conditionally independent of  $Y^{n-1}$ and $Z^{n-1}$ given $X^{n-1}$, similarly
\begin{align}
    H(X_n|Z_n,Y^{n-1})+H(X^{n-1}|Y^n,X_n)&= H(X_n|Z_n,Y^{n-1})+H(X^{n-1}|Y^{n-1},X_n,Z_n)\label{b}
    \\&=H(X^n|Z_n,Y^{n-1}),\nonumber
\end{align}
we have \eqref{b} since $X^{n-1}$ is conditionally independent of  $Y_{n}$ and $Z_{n}$ given $X_{n}$. 
\end{proof}

\paragraph{More capable with advantage}
Let's suppose $W_1+\eta_1\mc \Wtilde_1$ and $W_2+\eta_2\mc\Wtilde_2$. Let $X^2,Y^2$ and $Z^2$ such that 
$W_1\times W_2=P_{Y_1Y_2|X_1X_2}$ and 
$\Wtilde_1\times\Wtilde_2=P_{Z_1Z_2|X_1X_2}$.
By Lemma \ref{lemappx1}, we have 
\begin{equation*}
        I(X^2;Y^2)-I(X^2;Z^2)+(\eta_1+\eta_2) =\big[I(X_2;Y_2|Y_1)- I(X_2;Z_2|Y_1)+\eta_2\big]+\big[I(X_1;Y_1|Z_2)- I(X_1;Z_1|Z_2)+\eta_1\big].
\end{equation*}
Observe that for any conditioning $Y_1=y_1$, the conditional
distributions still obey $P_{Y_2|X_2}=W_2$ and $P_{Z_2|X_2}=\Wtilde_2$.
Therefore, applying $W_2+\eta_2\mc\Wtilde_2$, we get $I(X_2;Y_2|Y_1=y_1)-I(X_2;Z_2|Y_1=y_1)+\eta_2\ge 0$.
Averaging,
$I(X_2;Y_2|Y_1)-I(X_2;Z_2|Y_1)+\eta_2\ge 0$.
Similarly,
$I(X_1;Y_1|Z_2)-I(X_1;Z_1|Z_2)+\eta_1\ge 0$.

All in all,
$I(X^2;Y^2)-I(X^2;Z^2)+(\eta_1+\eta_2)\ge 0$.
Hence, $W_1\times W_2+(\eta_1+\eta_2)\mc\Wtilde_1\times\Wtilde_2$.
\qed

\paragraph{Less noisy with advantage}
    Let's suppose that $W_1+\eta_1\lnoisy \Wtilde_1$ and $W_2+\eta_2\lnoisy\Wtilde_2$. Consider a Markov chain $U\to X^2\to(Y^2,Z^2)$, we have by the chain rule
    \begin{align}
    \label{eq2}
        I(U;Z^2)=I(U; Z_1)+I(U; Z_2|Z_1).
    \end{align}
   For every conditioning $Z_1=z_1$,
   the sequence $U\to X_2\to(Y_2,Z_2)$ is a Markov chain (i.e. $P_{UY_2Z_2|X_2Z_1}=P_{Y_2Z_2|X_2Z_1}\times P_{U|X_2Z_1}$). Since $W_2+\eta_2\lnoisy \Wtilde_2$ and  $U\to X_2\to(Y_2,Z_2)|Z_1$ is a Markov chain, then $I(U; Z_2|Z_1)\le I(U; Y_2|Z_1)+\eta_2$, hence by Equation \ref{eq2} we have 
   \begin{align*}
       I(U;Z^2)\le I(U;Z_1)+ I(U; Y_2|Z_1)+\eta_2
       = I(U;Z_1Y_2)+\eta_2\;.
   \end{align*}Similarly $I(U; Y_2,Z_1)=I(U; Y_2)+I(U;Z_1| Y_2)$ and we can also check that $U\to X_1\to(Y_1,Z_1)|Y_2$ is a Markov chain, then since $W_1+\eta_1\lnoisy \Wtilde_1$ we have $I(U;Z_1| Y_2)\le I(U;Y_1| Y_2)+\eta_1$ and hence $I(U; Z_1Z_2)\le I(U; Y_2) + I(U;Y_1| Y_2)+\eta_1+\eta_2= I(U; Y^2)+\eta_1+\eta_2$.
\qed

\subsection{Proof of \Cref{pro1appx}}
    First, consider the special case $p=1/2$. Then,
    $f(r)=-(1-q)h(r)$. Since $0<q<1$, we are in the case $q<4p(1-p)$,
    and indeed function $f$ is strictly decreasing and convex for
    $r\in]0,1/2[$.

    From now on assume $p\ne 1/2$.
    The function $f$ is differentiable on $]0,\frac{1}{2}[$ and we have 
\begin{align*}
    f'(r)&=(r* p)' h'(r* p)-(1-q)h'(r)\\&=(1-2p)\log\big(\frac{1-r* p}{r* p}\big)-(1-q)\log\big(\frac{1-r}{r}\big),\ \text{recall that}\ h'(r)=\log\big(\frac{1-r}{r}\big).
\end{align*}
Similarly the function $f'$ is differentiable on $]0,\frac{1}{2}[$ and we have 
\begin{align*}
    f''(r)&=(1-2p)^2 h''(r* p)-(1-q)h''(r)\\&=\frac{1}{\ln{2}}\Bigg[\frac{-(1-2p)^2}{(r* p)(1-r* p)}+\frac{1-q}{r(1-r)}\Bigg],\ \text{recall that}\ h''(r)=-\frac{1}{\ln{2}}\frac{1}{r(1-r)}
    \\&=\frac{1}{\ln{2}}\Bigg[\frac{(1-2p)^2qr^2-q(1-2p)^2r+ p(1-q)(1-p)}{(r* p)(1-r* p)r(1-r)}\Bigg].
\end{align*}
Let us consider the function (which is the numerator of $f''$), $N(r)=(1-2p)^2qr^2-q(1-2p)^2r+ p(1-q)(1-p),$ we have 
$N'(r)=(1-2p)^2q(2r-1)< 0$ (for $r\in]0,\frac{1}{2}[$), so $N$ is strictly decreasing, then $f''$ has at most one root in $]0,\frac{1}{2}[$. This implies that $f'$ has at most one extremum in $]0,\frac{1}{2}[$. On the other hand, since $f'(\frac{1}{2})=0$ then there exist at most one $r_0\in]0,\frac{1}{2}[$ such that $f'(r_0)=0$ (otherwise $f'$ would have more than one extremum), therefore $f$ also has at most one extremum in $]0,\frac{1}{2}[$. Using the fact that $\lim_{r\to 0}{f'(r)}=-\infty$ (since $q<1$),
this extremum cannot be a maximum. Therefore, we can conclude that
\begin{align}
    \underset{0\le r\le\frac{1}{2}}{\max}\ f(r)&=\max\big(f(0),f(\frac{1}{2})\big)\\&=\label{max_f}\max\big(0,q-h(p)\big).
\end{align}
This solves the problem of where the maximum is achieved. On the other hand, by computing the discriminant of the quadratic function $N(r)$, we get $\Delta=q(1-2p)^2\big(q-4p(1-p)\big)$.
\begin{itemize}
    \item For $q\le 4p(1-p)$, we have $\Delta\le 0$, which implies
    that $N(r)$ does not have a zero on $r\in ]0,1/2[$, hence
    $f''>0$, so $f$ is convex and $f'$ is strictly increasing on $[0,\frac{1}{2}]$. Since $f'(\frac{1}{2})=0$, we also have $f'\le 0$, so $f$ 
    is strictly decreasing on $[0,\frac{1}{2}]$.
    \item For $q> 4p(1-p)$, then $\Delta>0$, since the quadratic $N(r)$
    achieves minimum at $r=1/2$, that means there exists $r'\in]0,1/2[$ such that on $]0,r'[$, $f''>0$ and on $]r',\frac{1}{2}]$, $f''<0$. So $f$ is convex on $[0,r']$ and concave on $[r',\frac{1}{2}]$. Also $f'$ is strictly increasing on $[0,r']$ and strictly decreasing on $[r',\frac{1}{2}]$. Since $f'(\frac{1}{2})=0$ and $\lim_{r\to 0}{f'(r)}=-\infty$, then $f'(r')>0$, so there exists $r_0\in]0,r'[$ such that $f'(r_0)=0$. Therefore $f$ is decreasing on $[0,r_0]$ and increasing on $]r_0,\frac{1}{2}]$.\qed
\end{itemize}

\subsection{Proof of \Cref{clm4.3}}
    Let $p\in ]0,\frac{1}{2}]$. Similar as in
    \Cref{sec:bec-bsc}, let 
    $f_{p,q}(r)=h(r*p)-h(p)-(1-q)h(r)$. Also let
    $\etamc(p,q)=\etamc(\BSC_p,\BEC_q)$.
    
    Let $\delta_0>0$ be small enough such that
    $0<p-\delta_0<p$ and consider the interval $I=[p-\delta_0,p+\delta_0]\cap[0,1/2]$.
    Let $\delta$ be such that $p+\delta\in I$. The conclusion will follow if we show that 
    $|\etamc(p+\delta,q)-\etamc(p,q)|$ goes to 0 as $|\delta|$ goes to 0. 
    We have 
    \begin{align*}
        |\etamc(p+\delta,q)-\etamc(p,q)|
        &=\left|\sup_{0\le r\le 1}f_{p+\delta,q}(r)-\sup_{0\le r\le 1}f_{p,q}(r)\right|&\\
        &\le \sup_{0\le r\le 1}|f_{p+\delta,q}(r)-f_{p,q}(r)|\\
        &=\sup_{0\le r\le 1}|h\big(r*(p+\delta)\big)-h(r*p)+h(p+\delta)-h(p)|\\
        &\le|\delta|\left(\sup_{x\in I}|h'(x)|+
        \sup_{x\in I,r\in[0,1]}\left|
        \frac{dh(r*x)}{dx}\right|\right)\\&\le C|\delta|\;.&\qed
    \end{align*}

\printbibliography
\end{document}